\documentclass[journal]{IEEEtran}

\usepackage{amsmath,amssymb,bm,mathtools}
\usepackage{cite}
\usepackage{graphicx}
\usepackage{microtype}
\usepackage{amsthm}
\usepackage{booktabs}
\usepackage{array}
\usepackage{algorithm}
\usepackage{algorithmic}
\usepackage[caption=false,font=footnotesize,labelformat=simple]{subfig}

\usepackage{url}
\usepackage{balance}
\usepackage{color}
\newtheorem{theorem}{Theorem}
\newtheorem{proposition}{Proposition}

\newtheorem{lemma}{Lemma}
\newtheorem{assumption}{Assumption}
\newtheorem{remark}{Remark}

\usepackage[hidelinks]{hyperref}

\renewcommand{\bm}[1]{\mathbf{#1}}         
\newcommand{\bmg}[1]{\boldsymbol{#1}}      
\newcommand{\jmathc}{\mathrm{j}}
\newcommand{\diag}{\operatorname{diag}}
\newcommand{\rank}{\operatorname{rank}}
\newcommand{\tr}{\operatorname{tr}}
\newcommand{\range}{\operatorname{range}}
\newcommand{\nulls}{\operatorname{null}}
\newcommand{\sinc}{\operatorname{sinc}}
\newcommand{\norm}[1]{\left\lVert #1\right\rVert}
\newcommand{\abs}[1]{\left\lvert #1\right\rvert}

\newcommand{\C}{\mathbb{C}}
\newcommand{\R}{\mathbb{R}}
\newcommand{\Z}{\mathbb{Z}}

\newcommand{\one}{\mathbf 1}
\newcommand{\Wwrap}{\mathcal W_{2\pi}}
\newcommand{\Qphi}{\mathcal Q_{B_\phi}}
\newcommand{\st}{\mathrm{st}}

\graphicspath{{figs/}{./}}

\begin{document}
\bstctlcite{IEEEexample:BSTcontrol}

\title{Space--Time Stacked Intelligent Metasurfaces for Doppler-Resilient LEO Broadcasting}

\author{Chandan Kumar Sheemar,~\IEEEmembership{Member,~IEEE},
Giovanni Iacovelli,~\IEEEmembership{Member,~IEEE},
Sourabh Solanki,~\IEEEmembership{Member,~IEEE},\\ and
Symeon Chatzinotas,~\IEEEmembership{Fellow,~IEEE}%
\thanks{C. K. Sheemar, G. Iacovelli, and S. Chatzinotas are with the Interdisciplinary Centre for Security, Reliability and Trust (SnT), University of Luxembourg, L-1855 Luxembourg (e-mail: \{chandankumar.sheemar, giovanni.iacovelli, symeon.chatzinotas\}@uni.lu). S. Solanki is with the National Institute of Technology Warangal, India (e-mail: ssolanki@nitw.ac.in).}
}

\maketitle
 
 \begin{abstract}
Stacked intelligent metasurfaces (SIMs) cascade programmable layers in front of a feed. In the SIM designs reported so far, including those proposed for low-Earth-orbit (LEO) satellites, the layer coefficients are space-only controls or are varied randomly in time to support opportunistic scheduling, so that every radiated beam carries the frequency of the feed. This paper introduces space--time SIMs (ST-SIMs), whose layers are modulated in time, and uses the temporal dimension of the stack deterministically: each spatial mode of the aperture is paired with its own prescribed frequency translation. The application is LEO broadcasting, where a common waveform reaches terminals with different Doppler shifts, and any common Doppler correction leaves residuals of up to half the Doppler span. We derive the harmonic-domain model of the stack, solve the minimum-energy space--frequency synthesis in closed form, and prove that full column rank of the steering matrix is necessary and sufficient for decoupling $K$ terminal-specific translations, a result that extends to groups of terminals. We further propose a phase-only synthesis algorithm with guaranteed stationarity, together with quantization and control-rate laws. Simulations at $20$~GHz over $50$ random drops show that a four-layer ST-SIM with a single radio-frequency chain removes the Doppler floor of a static SIM for footprint radii above about $45$~km and, under the ideal phase-only model and at equal radiated peak energy, approaches ideal digital per-terminal precompensation within $0.5$~dB for resolvable terminals, whereas, in the evaluated geometries, a single time-modulated layer cannot synthesize the required modes.
\end{abstract}

\begin{IEEEkeywords}
Doppler compensation, LEO satellite, OFDM, space--time SIM, wave-domain signal processing.
\end{IEEEkeywords}

\section{Introduction}
\IEEEPARstart{L}{ow}-Earth-orbit (LEO) constellations are a cornerstone of fifth- and sixth-generation non-terrestrial networks (NTNs) owing to their global coverage, moderate latency, and standardized integration with terrestrial radio access \cite{Kodheli2021,kumar20235g,sheemarbeam}. Their orbital speed of about $7.5$~km/s is also their most persistent impairment, since the satellite--terminal range changes by kilometers per second and the Doppler shift scales with the carrier frequency. At $20$~GHz, the Doppler shift exceeds $\pm400$~kHz, i.e., more than ten subcarrier spacings of a 30-kHz orthogonal frequency-division multiplexing (OFDM) numerology, and it drifts at rates of several kHz/s \cite{Ali1998,You2020,3GPP38821}. An uncompensated carrier-frequency offset (CFO) of this size destroys subcarrier orthogonality and generates inter-carrier interference (ICI) \cite{Pollet1995,Moose1994}, and Doppler handling permeates every NTN procedure \cite{Kodheli2021RA,Lin2021Doppler,Meshram2025}. Since orbital motion is deterministic, the predominant remedy is precompensation, whereby the transmitter, i.e., the gateway for a transparent and the satellite for a regenerative payload, shifts the transmitted carrier by the Doppler predicted from the ephemeris \cite{3GPP38821,Lin2021Doppler,Kim2022}. Such a correction is exact for a single direction. A broadcast beam, however, serves terminals with different line-of-sight (LoS) directions and radial velocities, and, as shown in Section~\ref{sec:ideal}, the best common translation leaves a worst-terminal residual equal to half the Doppler span, a floor independent of the transmit power, the noise level, and any digital processing of the common waveform. For a 200-km footprint radius at $20$~GHz this span exceeds $300$~kHz, i.e., ten subcarrier spacings, and the residual is left to the terminals \cite{Meshram2025,Kodheli2021RA}.

  Stacked intelligent metasurfaces (SIMs) cascade $L$ programmable metasurface layers of $N$ meta-atoms each in front of a feed, so that inter-layer diffraction implements a high-dimensional transformation of the incident wave \cite{AnJSAC2023,AnICC2023,iacovelli2026stacked}. Each layer applies a diagonal phase profile to the field impinging on it, and the free-space propagation between consecutive layers mixes the outputs of all meta-atoms, so that the cascade realizes a dense $N\times N$ linear operator with only $LN$ phase controls and without any digital multiplication. The stack can thus perform beamforming, precoding, or a discrete Fourier transform in the wave domain at the speed of light, which reduces the number of RF chains, the baseband load, and the power consumption, while its depth $L$ and its $N$ meta-atoms per layer provide degrees of freedom that a single-layer surface lacks. SIMs have been studied for multiuser beamforming \cite{An2025}, channel estimation \cite{Yao2024}, physically consistent modeling \cite{Nerini2024}, LEO satellite downlinks \cite{Lin2024,sheemar2026one}, wideband MIMO-OFDM \cite{Li2026}, unmanned aerial vehicles (UAVs) \cite{sheemar2026energy,sheemarUAV2026stacked}, and doubly-dispersive channels \cite{Ranasinghe2026,SheemarSurvey2026}. In all of these works, the layer coefficients are space-only control variables: the stack shapes the spatial or frequency-selective response of a beam, but every beam carries the frequency of the feed, and temporal modulation of the layers has not been exploited as a deterministic signal-processing degree of freedom.

\subsection{State-of-the-Art}
Doppler compensation in NTNs has been studied extensively. The 3GPP NTN study \cite{3GPP38821} relies on ephemeris-assisted precompensation at the network and on CFO estimation from reference signals at the terminal \cite{Lin2021Doppler}, and dedicated solutions have been proposed for random access \cite{Kodheli2021RA}, narrowband Internet-of-Things (IoT) uplinks \cite{Kim2022}, multibeam initial synchronization \cite{Meshram2025}, and massive multiple-input multiple-output (MIMO) LEO downlinks with per-user Doppler and delay precompensation in the digital domain \cite{You2020}. Doppler-robust waveforms \cite{Hadani2017,Bemani2023} relocate the problem to the receiver equalizer, and reconfigurable intelligent surfaces (RISs) have mitigated Doppler in high-mobility links \cite{deSouza2024} as externally illuminated reflectors rather than as the satellite front end.

In parallel, the electromagnetics literature has established that time-varying surfaces manipulate frequency. Time-modulated arrays were used decades ago to shape sidelobes and generate harmonics \cite{Kummer1963}, and a single-layer time-modulated array has been shown to separate downlink users by pre-shifted carriers with one RF chain \cite{He2015}. Space--time gradient metasurfaces \cite{Hadad2015} and spacetime metamaterials \cite{Caloz2020} showed that joint spatial and temporal variation yields frequency conversion and nonreciprocity, and space--time-coding digital metasurfaces \cite{Zhang2018} demonstrated programmable control of the propagation direction and harmonic power distribution, independent control of harmonic amplitude and phase \cite{Dai2018}, joint multi-frequency beam shaping and steering \cite{Castaldi2021}, Doppler-effect emulation and frequency conversion \cite{Ramaccia2020,Taravati2021,Taravati2022}, space- and frequency-division multiplexing \cite{Zhang2021}, sideband-free frequency conversion in a waveguide-fed antenna \cite{Wu2022}, and frequency conversion with beam shaping on a 1-bit metasurface antenna \cite{Sun2025,Gholami2026}. Closest to the present work, a single-layer 1-bit spatiotemporal reflecting metasurface on the propagation path offsets the Doppler shift of one high-mobility link by steering one frequency-translated harmonic toward the user, with a switching sequence obtained by a neural network \cite{Zhu2024}. Stacks of spatially uniform time-varying metasurfaces realize nonreciprocal isolation and frequency circulation \cite{Movahediqomi2026}, and in the SIM literature temporal variation has so far served only to randomize the channel for opportunistic scheduling \cite{Darsena2026}.

\begin{table}[t]
\centering
\caption{Positioning With Respect to Time-Varying Metasurfaces and SIMs}
\label{tab:position}
\scriptsize
\setlength{\tabcolsep}{2pt}
\renewcommand{\arraystretch}{1.1}
\begin{tabular}{@{}l c c c c c c@{}}
\toprule
Work & \shortstack{Multi-\\layer} & \shortstack{Time-\\varying} & \shortstack{Prescribed\\shift} & \shortstack{Distinct\\per beam} & \shortstack{Doppler-\\matched} & \shortstack{Stack\\laws}\\
\midrule
Space-only SIM \cite{AnJSAC2023,An2025,Lin2024} & \checkmark & -- & -- & -- & -- & --\\
Randomized ST-SIM \cite{Darsena2026} & \checkmark & \checkmark & -- & -- & -- & --\\
Stacked TV surfaces \cite{Movahediqomi2026} & \checkmark & \checkmark & \checkmark & -- & -- & --\\
Time-modulated array \cite{He2015} & -- & \checkmark & \checkmark & \checkmark & -- & --\\
STC metasurfaces \cite{Zhang2018,Castaldi2021,Wu2022} & -- & \checkmark & \checkmark & \checkmark & -- & --\\
Doppler-offset STM \cite{Zhu2024} & -- & \checkmark & \checkmark & -- & 1 link & --\\
This work & \checkmark & \checkmark & \checkmark & \checkmark & \checkmark & \checkmark\\
\bottomrule
\end{tabular}
\end{table}

Table~\ref{tab:position} positions this paper. In single-layer time-modulated surfaces, the beam frequencies are harmonics that serve as channels \cite{He2015,Castaldi2021,Zhang2021,Wu2022}, or a single translation offsets the Doppler of one link \cite{Zhu2024}, so that terminals with different Doppler shifts retain half their Doppler span (Proposition~\ref{prop:common}); in multilayer stacks, the time dimension has served randomization \cite{Darsena2026} or, with uniform layers, nonreciprocity \cite{Movahediqomi2026}. The deterministic pairing of distinct spatial modes of a stack with Doppler-matched translations, and the conditions under which a finite-depth, finite-resolution, and finitely fast phase-only stack realizes it, have not been investigated.  

\subsection{Contributions}
This paper proposes space-time (ST-) SIMs for deterministic space--frequency processing at the LEO satellite (Fig.~\ref{fig:concept}). The spatial modes and the temporal translations cannot be designed independently, since every frequency-translated mode must be spatially nulled at the unintended terminals and the resulting field must be synthesized by a finite-depth, finite-resolution, and finitely fast phase-only stack. The main contributions of our work are:
 \begin{itemize}
\item We present the \emph{ST-SIM architecture} and derive its \emph{harmonic-domain model}, which shows that stacking convolves the temporal spectra of the layers while mixing their spatial fields, and that desired modes and unwanted sidebands share one passive energy budget.
\item We develop the \emph{space--frequency synthesis theory}: we solve the minimum-energy multiuser synthesis in closed form with its complete solution set, geometry-dependent energy bounds, and long-run peak energy; we prove that full column rank of the steering matrix is necessary and sufficient for exactly decoupling $K$ terminal-specific translations, a condition that grouped compensation relaxes; we show that any common translation, hence any static SIM or single-translation time-modulated surface, leaves half the Doppler span; and we quantify in closed form the OFDM interference caused by residual Doppler and ephemeris errors.
\item We propose a \emph{realizable design}: a phase-only synthesis algorithm with closed-form gain scaling, reusable multilayer gradients, guaranteed monotone decrease to stationary points, and quantization-aware refinement, and we derive explicit quantization and control-rate laws.
\item We demonstrate the architecture in a \emph{link-level study} at $20$~GHz over $50$ random drops, in which we compare the ST-SIM with a single-layer time-modulated aperture, a static SIM with optimal common precompensation, with the same payload combined with per-terminal CFO correction, and with ideal digital per-terminal precompensation, and we map the operating regimes in footprint radius, aperture size, number of terminals, and grouping.
\end{itemize}

\begin{figure}[t]
\centering
\includegraphics[width=\columnwidth]{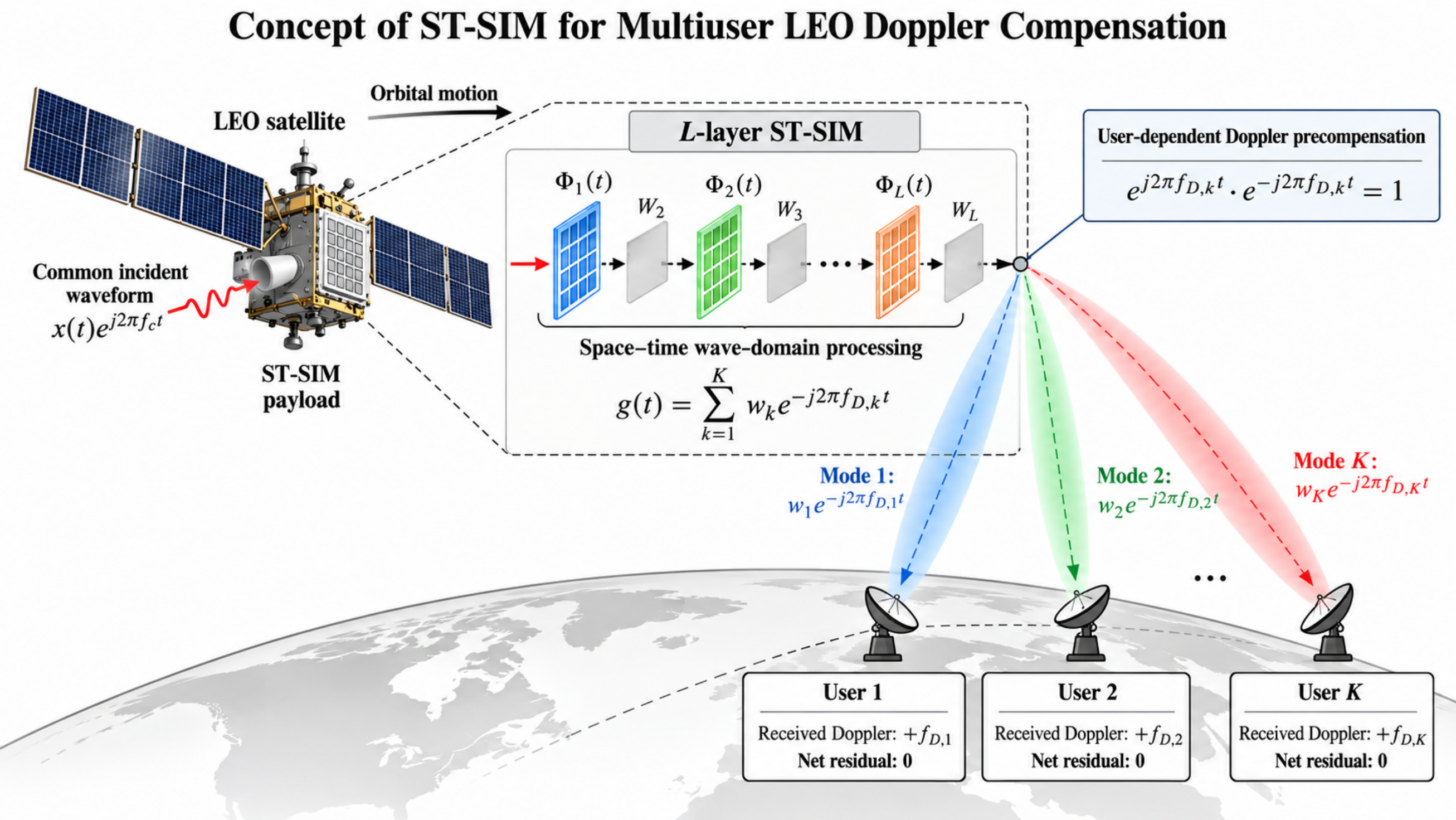}
\caption{User-dependent Doppler precompensation in the electromagnetic domain with an ST-SIM. A single feed illuminates an $L$-layer stack of time-varying phase-only metasurfaces on a LEO satellite, which transforms the common waveform into $K$ spatial modes directed toward the $K$ terminals; mode $k$ is translated in frequency by $-f_{D,k}$ so that terminal $k$ receives a carrier-Doppler-free signal.}
\label{fig:concept}
\end{figure}

\emph{Paper Organization:} Section~\ref{sec:model} presents the system model, Section~\ref{sec:ideal} the space--frequency synthesis theory, and Section~\ref{sec:practical} the phase-only synthesis algorithm and its hardware laws. Section~\ref{sec:results} reports the numerical results, and Section~\ref{sec:conclusion} concludes the paper.

\emph{Notation:} Boldface lowercase and uppercase letters denote vectors and matrices. $(\cdot)^T$, $(\cdot)^H$, $(\cdot)^*$, and $(\cdot)^\dagger$ denote transpose, Hermitian transpose, conjugation, and Moore--Penrose pseudoinverse; $\norm{\cdot}_2$ and $\norm{\cdot}_F$ the Euclidean/spectral and Frobenius norms; $\lambda_{\min}(\cdot)$, $\lambda_{\max}(\cdot)$ the extreme eigenvalues of a Hermitian matrix; $\diag(\bm x)$ the diagonal matrix built from $\bm x$; $\bm I_N$ the $N\times N$ identity; $\bm e_n$ the $n$th canonical vector; $\Wwrap(\cdot)$ the entrywise phase wrap to $[-\pi,\pi)$; and $\sinc(x)=\sin(\pi x)/(\pi x)$.
\section{System Model}\label{sec:model}
We consider a 6G multicast--broadcast service in which one LEO beam delivers a common data stream to $K$ registered terminals whose positions are known to the network from their registration reports. 

\subsection{LEO Propagation and Doppler}
Let $u$ denote transmit time and $t$ receive time. In an Earth-centered inertial frame, the satellite and terminal $k$ have positions $\bm r_s(\cdot)$ and $\bm r_k(\cdot)$, a wavefront emitted at $u$ is received at $t=u+\tau_k(u)$, and the one-way light time $\tau_k(u)$ solves the light-time equation $c\,\tau_k(u)=\norm{\bm r_k(u+\tau_k(u))-\bm r_s(u)}_2$ \cite{montenbruck2002satellite}. Let $x(u)$ be the unit-power complex envelope applied to a common RF feed at carrier $f_c$, $\ell_k(u)\ge0$ the slowly varying amplitude gain, $\bm a_k(u)\in\C^N$ the LoS steering vector of the radiating aperture, and $\bm g(u)\in\C^N$ the aperture field produced by the ST-SIM. After ideal downconversion at $f_c$, the received baseband signal is
\begin{equation}
y_k(t)=\ell_k(u)\,\bm a_k^H(u)\bm g(u)\,x(u)\,e^{-\jmathc2\pi f_c\tau_k(u)}+n_k(t),
\label{eq:exactrx}
\end{equation}
with $u$ the retarded transmit time and $n_k$ additive noise. Model~\eqref{eq:exactrx} does not introduce Doppler as a separate impairment: the carrier Doppler is the time derivative of the delay-induced carrier phase. 
Appendix~\ref{app:lighttime} shows that, with $\bm u_k$ the unit line-of-sight (LoS) vector and $\bm v_k^{\rm rel}=\dot{\bm r}_k-\dot{\bm r}_s$ the relative velocity, the instantaneous carrier Doppler is $f_{D,k}=-(f_c/c)\bm u_k^T\bm v_k^{\rm rel}$ up to a relative correction below $2.5\times10^{-5}$ for LEO motion, Over one OFDM block of duration $T_b$ centered at $t_0$, the carrier phase is linear in $t$ up to $\pi\abs{\dot f_{D,k}}T_b^2/4$, the delay varies by at most $\norm{\bm v_k^{\rm rel}}_2T_b/c$, and the LoS direction rotates by at most $\norm{\bm v_k^{\rm rel}}_2T_b/d_k$, with $d_k$ the range\footnote{For the pass of Section~\ref{sec:results}, $\abs{\dot f_{D,k}}\le5.8$~kHz/s and $T_b\approx35.7~\mu$s give a phase error below $10^{-5}$~rad, a delay variation of $0.9$~ns (3\% of a sample), and a rotation of $4.5\times10^{-7}$~rad.}. Under these conditions, \eqref{eq:exactrx} reduces to the blockwise model
\begin{equation}
y_k(t)=\alpha_k e^{\jmathc2\pi f_{D,k}(t-t_0)}\bm a_k^H\bm g(t-\tau_k)x(t-\tau_k)+n_k(t),
\label{eq:blockrx}
\end{equation}
where $\alpha_k$ absorbs the constant phase and path gain. Model~\eqref{eq:blockrx} is a LoS model; multipath Doppler, phase noise, and oscillator offsets remain for the terminal's synchronization. The ST-SIM thus performs the coarse synchronization by removing the orbital Doppler of the LoS path, and the terminal performs only the fine synchronization.

\subsection{Multilayer ST-SIM Transfer Function}
The satellite carries an $L$-layer ST-SIM with $N$ meta-atoms per layer, illuminated by one RF feed, which radiates the OFDM signal at carrier $f_c$ with bandwidth $B=M\Delta f$,  through a fixed coupling vector $\bm f\in\C^N$, $\norm{\bm f}_2=1$. The time-varying phase matrix of layer $l$ is
\begin{equation}
\bm\Phi_l(u)=\diag\!\big(e^{\jmathc\theta_{l,1}(u)},\ldots,e^{\jmathc\theta_{l,N}(u)}\big),
\label{eq:Phil}
\end{equation}
and $\bm W_l\in\C^{N\times N}$, $l=2,\ldots,L$, is the fixed propagation operator from layer $l-1$ to layer $l$. The aperture field is
\begin{equation}
\bm g(u)=\bm G(u)\bm f,\qquad
\bm G(u)=\bm\Phi_L(u)\bm W_L\cdots\bm\Phi_2(u)\bm W_2\bm\Phi_1(u).
\label{eq:G}
\end{equation}
Let $\nu_{\max}=\sup_k\abs{f_{D,k}}$ and let $\tau_{\rm stack}$ be the transit time across the stack.
\begin{assumption}[Frequency-flat inter-layer propagation]\label{as:flat}
$\nu_{\max}/f_c\ll1$, so a common $\bm W_l$ applies to all generated sidebands.
\end{assumption}
\begin{assumption}[Negligible modulation during transit]\label{as:transit}
$\nu_{\max}\tau_{\rm stack}\ll1$, so all layers are evaluated at the same control time $u$ in \eqref{eq:G}.
\end{assumption}
For held control, the modulation spectrum of the layers extends to the update rate $f_u$ and its harmonics rather than to $\nu_{\max}$, so both conditions must hold with $\nu_{\max}$ replaced by the control bandwidth. For a passive stack we assume
\begin{equation}
\norm{\bm W_l}_2\le\kappa_l\le1,\qquad l=2,\ldots,L,
\label{eq:contractive}
\end{equation}
with equality for lossless propagation; the $\bm W_l$ are moreover assumed nonsingular, as the Rayleigh--Sommerfeld operators of Section~\ref{sec:results} are. Equations~\eqref{eq:Phil}--\eqref{eq:G} constitute the ideal phase-only wave-domain model of the SIM literature \cite{AnJSAC2023,An2025}: diagonal unit-modulus layers, fixed inter-layer operators, and no multiple reflections, state-dependent or amplitude--phase coupling, or bias-network parasitics. In this model, $\norm{\bm g(u)}_2^2$ is the energy of the element excitations, which equals the radiated power only up to the radiation matrix of the aperture; the accuracy of the single-pass cascade depends on the inter-layer spacing and on the element design \cite{Nerini2024,iacovelli2026stacked}, and the transmission efficiency $\eta_L$ is a property of the normalized model rather than of a device. 

\subsection{Harmonic-Domain Representation}
Suppose first that each layer is periodic with modulation period $T_m=1/f_m$, so that $\bm\Phi_l(u)=\sum_{p\in\Z}\bm C_l^{[p]}e^{\jmathc2\pi pf_mu}$ with diagonal Fourier coefficients $\bm C_l^{[p]}=T_m^{-1}\int_0^{T_m}\bm\Phi_l(u)e^{-\jmathc2\pi pf_mu}du$.
\begin{lemma}[Harmonic convolution across the stack]\label{lem:conv}
$\bm G(u)=\sum_{p\in\Z}\bm G^{[p]}e^{\jmathc2\pi pf_mu}$ with
\begin{equation}
\bm G^{[p]}=\sum_{p_1+\cdots+p_L=p}\bm C_L^{[p_L]}\bm W_L\cdots\bm C_2^{[p_2]}\bm W_2\bm C_1^{[p_1]},
\label{eq:harmonicconv}
\end{equation}
and the $p$th aperture harmonic is $\bm g^{[p]}=\bm G^{[p]}\bm f$.
\end{lemma}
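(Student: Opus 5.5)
The plan is to substitute the Fourier expansion of each layer into the ordered product \eqref{eq:G} and collect terms by total frequency index. Each $\bm\Phi_l(u)$ is diagonal with unit-modulus entries, so it is bounded. If the phases $\theta_{l,n}$ are assumed piecewise smooth (or at least of bounded variation), the Fourier series $\bm\Phi_l(u)=\sum_{p}\bm C_l^{[p]}e^{\jmathc2\pi pf_mu}$ converges entrywise. The coefficients $\bm C_l^{[p]}$ are diagonal because each is an integral of a diagonal matrix. Inserting the $L$ expansions into $\bm\Phi_L\bm W_L\cdots\bm W_2\bm\Phi_1$ and expanding the product by multilinearity gives a sum over $L$-tuples $(p_1,\ldots,p_L)\in\Z^L$:
\begin{equation*}
\bm G(u)=\sum_{p_1,\ldots,p_L}\bm C_L^{[p_L]}\bm W_L\cdots\bm W_2\bm C_1^{[p_1]}\,e^{\jmathc2\pi(p_1+\cdots+p_L)f_mu}.
\end{equation*}
Here the scalar exponentials commute with the matrices and can be pulled out. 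Grouping the tuples by $p=p_1+\cdots+p_L$ then yields \eqref{eq:harmonicconv}. Right-multiplying by the fixed vector $\bm f$ gives $\bm g^{[p]}=\bm G^{[p]}\bm f$.

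To make the rearrangement rigorous without any convergence assumption on the series, I will instead compute the Fourier coefficients of $\bm G$ directly. The function $\bm G(u)$ is $T_m$-periodic and bounded, since $\norm{\bm G(u)}_2\le\prod_l\kappa_l\le1$ by \eqref{eq:contractive}, and so it lies in $L^2$ on one period. I then argue by induction on $L$. Set $\bm G_1=\bm\Phi_1$ and $\bm G_l=\bm\Phi_l\bm W_l\bm G_{l-1}$. Multiplication by the constant matrix $\bm W_l$ commutes with taking Fourier coefficients. The Fourier coefficients of a product of two $L^2$ periodic functions are the discrete convolution of their coefficient sequences; this is the convolution theorem on the circle, and the sum converges absolutely by Cauchy--Schwarz and Parseval. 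Hence $\bm G_l^{[p]}=\sum_{q}\bm C_l^{[q]}\bm W_l\bm G_{l-1}^{[p-q]}$ entrywise. Unrolling this recursion gives exactly the nested sum over $p_1+\cdots+p_L=p$. The identity $\bm G(u)=\sum_p\bm G^{[p]}e^{\jmathc2\pi pf_mu}$ then holds in the $L^2$ sense, and pointwise wherever the series converges.

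The main obstacle is this convergence bookkeeping: justifying the interchange of the $L$-fold sum for merely bounded, possibly discontinuous phase profiles such as held or 1-bit switching. I would handle it by the coefficient-level induction above, where absolute summability of each convolution step follows from $\sum_q\abs{c_q}\abs{d_{p-q}}\le\norm{c}_2\norm{d}_2$. I would stress that the identity is asserted for coefficients and in $L^2$, rather than as a pointwise statement.
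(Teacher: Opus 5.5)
Your proof is correct, and it takes a genuinely different route from the paper. The paper adds, inside its proof, the hypothesis $\sum_p\norm{\bm C_l^{[p]}}_2<\infty$ for every layer. It then substitutes the series into \eqref{eq:G}, regroups the $L$-fold sum freely because it converges absolutely, and identifies the result with the harmonics of $\bm G(u)$ by uniqueness of Fourier coefficients. That buys an absolutely and uniformly convergent expansion of $\bm G(u)$ and makes the unordered sum over $p_1+\cdots+p_L=p$ unconditionally meaningful. However, it explicitly excludes held or 1-bit switching control, whose coefficients decay only as $1/\abs p$; the paper defers that case to Proposition~\ref{prop:zoh}. Your coefficient-level induction through the $L^2$ product (convolution) theorem needs only bounded measurable phases, so it covers exactly the case the paper sets aside.

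Two points should be stated precisely. First, your Cauchy--Schwarz bounds give absolute convergence at each nesting level, i.e., of $\sum_{p_L}\bm C_L^{[p_L]}\bm W_L\sum_{p_{L-1}}\bm C_{L-1}^{[p_{L-1}]}\bm W_{L-1}\cdots$. They do not give absolute convergence of the unordered multi-index sum when $L\ge3$, because the convolution of two $\ell^2$ sequences is bounded but need not be square summable. So \eqref{eq:harmonicconv} should be read in that nested order rather than claimed ``exactly'' as an unrestricted multi-index sum. For held control, with $\mathcal O(1/\abs p)$ coefficients, the multi-index sum does converge absolutely, so nothing is lost in the case of practical interest. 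Second, the reconstruction $\bm G(u)=\sum_p\bm G^{[p]}e^{\jmathc2\pi pf_mu}$ holds in $L^2$, but not ``pointwise wherever the series converges.'' At a switching instant of a held state, the symmetric partial sums converge to the midpoint of the jump, not to $\bm G(u)$. Say almost everywhere, or at points of continuity, instead. Your opening heuristic (entrywise convergence of each factor, then expanding the product) would not justify the rearrangement on its own, but you rightly replace it with the coefficient-level argument.
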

\begin{IEEEproof}
Assume that the Fourier coefficients of every layer are absolutely summable, $\sum_p\norm{\bm C_l^{[p]}}_2<\infty$, which holds, e.g., for smooth phase trajectories or for finitely many harmonics, but not for a state held constant over a control slot, whose coefficients decay as $1/\abs p$; the held control is treated in Proposition~\ref{prop:zoh}, and the receiver filter retains finitely many of its harmonics, to which the lemma applies. Substituting $\bm\Phi_l(u)=\sum_{p_l\in\Z}\bm C_l^{[p_l]}e^{\jmathc2\pi p_lf_mu}$ into \eqref{eq:G} gives
\begin{equation}
\bm G(u)= \hspace{-3mm}\sum_{p_1,\ldots,p_L\in\Z} \hspace{-3mm}\bm C_L^{[p_L]}\bm W_L\cdots\bm C_2^{[p_2]}\bm W_2\bm C_1^{[p_1]}e^{\jmathc2\pi(p_1+\cdots+p_L)f_mu},
\end{equation}
where the order of the matrix factors is preserved because the scalar exponentials commute with all matrices, while the $\bm C_l^{[p_l]}$ and $\bm W_l$ do not commute in general. By submultiplicativity, the $L$-fold sum converges absolutely, with $\sum_{p_1,\ldots,p_L}\norm{\cdot}_2\le\prod_l\kappa_l\sum_{p_l}\norm{\bm C_l^{[p_l]}}_2<\infty$, so its terms can be regrouped freely. Collecting all tuples with $p_1+\cdots+p_L=p$ yields a series in $e^{\jmathc2\pi pf_mu}$ with coefficients \eqref{eq:harmonicconv}. Since $\bm G(u)$ is $T_m$-periodic and its Fourier coefficients are unique, these coefficients are the harmonics $\bm G^{[p]}$ of $\bm G(u)$. Finally, $\bm f$ is time-invariant, so $\bm g(u)=\bm G(u)\bm f=\sum_p\bm G^{[p]}\bm f\,e^{\jmathc2\pi pf_mu}$, i.e., $\bm g^{[p]}=\bm G^{[p]}\bm f$.
\end{IEEEproof}

Stacking thus convolves the temporal spectra of the layers while the inter-layer operators mix the associated spatial fields. The reconfiguration capabilities of the meta-atoms enter \eqref{eq:harmonicconv} only through the $\bm C_l^{[p]}$: a state held over a reconfiguration period $T_u$ weights the harmonics by $\sinc(pf_mT_u)$ and replicates them at multiples of $f_u=1/T_u$, so that only tones with $\abs{\nu}<f_u/2$ can be synthesized (Proposition~\ref{prop:zoh}), and a $B_\phi$-bit reconfiguration accuracy diverts energy into the unwanted harmonics $\bm e_{\st}(u)$ (Proposition~\ref{thm:quant}). Lemma~\ref{lem:conv} also specifies which modal fields the stack can produce. Let the $Q$ tones to be synthesized, $\nu_1,\ldots,\nu_Q$, lie on a grid of spacing $\delta_f$ (cf.\ Section~\ref{sec:practical}) and set $f_m=\delta_f$. Each tone is then a harmonic of the stack, $\nu_q=-p_qf_m$ with $p_q\in\Z$, and the aperture field decomposes as
\begin{equation}
\bm g(u)=\sum_{q=1}^{Q}\bmg\psi_qe^{-\jmathc2\pi\nu_qu}+\bm e_{\st}(u),\quad \bmg\psi_q=\bm g^{[p_q]},
\label{eq:modal}
\end{equation}
where $\bm e_{\st}(u)=\sum_{p\notin\{p_1,\ldots,p_Q\}}\bm g^{[p]}e^{\jmathc2\pi pf_mu}$ collects the unwanted harmonics generated by \eqref{eq:harmonicconv}. The number of modes $Q$ is the number of beams, each of which carries one frequency translation, whether it serves one terminal or several: $Q=K$ for one beam per terminal and $Q=G$ for one multicast beam per group in the grouped compensation. Synthesis therefore amounts to concentrating the harmonic content \eqref{eq:harmonicconv} on the indices $p_1,\ldots,p_Q$ with the prescribed spatial profiles; Section~\ref{sec:practical} performs this in the time domain. For tones that do not share a common grid, the control is quasi-periodic, and \eqref{eq:harmonicconv} holds with the integer indices replaced by the frequencies generated by the layers and index addition replaced by frequency addition.

 \begin{lemma}[Spectral-energy constraint]\label{lem:energy}
Let the inter-layer operators satisfy \eqref{eq:contractive} and let $\norm{\bm f}_2=1$. Then, for every control trajectory $\{\bm\theta_l(u)\}_{l=1}^L$, the aperture field \eqref{eq:G} satisfies
\begin{equation}
\norm{\bm g(u)}_2\le1,\qquad \forall u\in\R,
\label{eq:energyinst}
\end{equation}
with equality for all $u$ if $\bm W_l$ is unitary for every $l=2,\ldots,L$. Moreover, if the control is periodic with period $T_m$, the aperture harmonics $\bm g^{[p]}$ of Lemma~\ref{lem:conv} satisfy
\begin{equation}
\sum_{p\in\Z}\norm{\bm g^{[p]}}_2^2=\frac{1}{T_m}\int_0^{T_m}\norm{\bm g(u)}_2^2\,du\le1.
\label{eq:energyharm}
\end{equation}
\end{lemma}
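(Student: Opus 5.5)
The plan has two parts: a pointwise bound for each fixed $u$, obtained from submultiplicativity of the spectral norm along the cascade \eqref{eq:G}, and a Parseval identity for the periodic case.

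For the instantaneous bound \eqref{eq:energyinst}, fix $u$. Each layer $\bm\Phi_l(u)$ in \eqref{eq:Phil} is diagonal with unit-modulus entries, hence unitary, so $\norm{\bm\Phi_l(u)\bm x}_2=\norm{\bm x}_2$ for every $\bm x$. I will proceed by induction on the layer index. Set $\bm g_1=\bm\Phi_1(u)\bm f$ and $\bm g_l=\bm\Phi_l(u)\bm W_l\bm g_{l-1}$. Then
\begin{equation*}
\norm{\bm g_l}_2=\norm{\bm W_l\bm g_{l-1}}_2\le\kappa_l\norm{\bm g_{l-1}}_2 ,
\end{equation*}
using \eqref{eq:contractive}. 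This yields $\norm{\bm g(u)}_2\le\prod_{l=2}^L\kappa_l\,\norm{\bm f}_2\le1$. If every $\bm W_l$ is unitary, each inequality in the chain is an equality. Hence $\norm{\bm g(u)}_2=\norm{\bm f}_2=1$ for all $u$, regardless of the control trajectory.

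For the harmonic statement \eqref{eq:energyharm}, assume the control is $T_m$-periodic. Then $\bm g(u)$ is $T_m$-periodic and, by the first part, bounded. It is therefore in $L^2([0,T_m];\C^N)$ provided it is measurable, which is implicit in defining the Fourier coefficients $\bm C_l^{[p]}$. I will apply Parseval's identity componentwise to the $N$ entries $g_n(u)$ and sum over $n$. This gives $\sum_p\norm{\bm g^{[p]}}_2^2=T_m^{-1}\int_0^{T_m}\norm{\bm g(u)}_2^2du$, where $\bm g^{[p]}$ denotes the Fourier coefficients of $\bm g$. The averaged integrand is at most $1$ by \eqref{eq:energyinst}, which yields the final inequality.

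The main obstacle is identifying these Fourier coefficients with the $\bm g^{[p]}=\bm G^{[p]}\bm f$ of Lemma~\ref{lem:conv}, because that lemma was proved only under absolute summability of the layer coefficients. In that case, uniqueness of Fourier coefficients of the periodic function $\bm g$ settles the identification, as in the proof of Lemma~\ref{lem:conv}. For held (piecewise-constant) control, where absolute summability fails, I would simply \emph{define} $\bm g^{[p]}$ as the $L^2$ Fourier coefficients of $\bm g$. Parseval for $L^2$ functions needs no summability, so the energy statement holds in full generality. It is consistent with \eqref{eq:harmonicconv} whenever that series converges.
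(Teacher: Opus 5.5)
Your proof is correct and takes essentially the same route as the paper. For the pointwise bound, the paper uses submultiplicativity of the spectral norm over the cascade \eqref{eq:G}, with each $\bm\Phi_l(u)$ unitary and \eqref{eq:contractive}, and equality when every $\bm W_l$ is unitary; your layer-by-layer induction is the vector form of that chain. For \eqref{eq:energyharm}, both arguments apply entrywise Parseval to the bounded $T_m$-periodic field, and your extra care in taking $\bm g^{[p]}$ as the $L^2$ Fourier coefficients when the summability hypothesis of Lemma~\ref{lem:conv} fails is a harmless refinement, since the paper in effect does the same by writing $\bm g^{[p]}=T_m^{-1}\int_0^{T_m}\bm g(u)e^{-\jmathc2\pi pf_mu}du$ directly.
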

 \begin{IEEEproof}
Fix $u$. Each phase matrix $\bm\Phi_l(u)$ in \eqref{eq:Phil} is diagonal with unit-modulus entries, so $\bm\Phi_l^H(u)\bm\Phi_l(u)=\bm I_N$, i.e., $\bm\Phi_l(u)$ is unitary and preserves the Euclidean norm of every vector, $\norm{\bm\Phi_l(u)\bm x}_2=\norm{\bm x}_2$; in particular $\norm{\bm\Phi_l(u)}_2=1$. Applying the submultiplicativity of the spectral norm to the product \eqref{eq:G} and using \eqref{eq:contractive} gives
\begin{equation}
\begin{aligned}
\norm{\bm g(u)}_2=\norm{\bm G(u)\bm f}_2 &\le\prod_{l=1}^{L}\norm{\bm\Phi_l(u)}_2\prod_{l=2}^{L}\norm{\bm W_l}_2\,\norm{\bm f}_2 \\& \le\prod_{l=2}^{L}\kappa_l\le1,
\end{aligned}
\end{equation}
since $\norm{\bm f}_2=1$; the bound holds for every $u$ and every phase trajectory. If all $\bm W_l$ are unitary, every factor of $\bm G(u)$ is unitary, hence $\bm G(u)$ is unitary and $\norm{\bm g(u)}_2=\norm{\bm f}_2=1$ for all $u$, which is the equality case. For the second claim, let the control be periodic with period $T_m$. Then $\bm g(u)$ is $T_m$-periodic, and it is bounded by the first claim, hence square integrable over one period. By Lemma~\ref{lem:conv}, its Fourier coefficients are the aperture harmonics $\bm g^{[p]}=T_m^{-1}\int_0^{T_m}\bm g(u)e^{-\jmathc2\pi pf_mu}du$. Applying Parseval's identity to each of the $N$ entries $[\bm g(u)]_n$ and summing over $n$ yields
\begin{equation}
\begin{aligned}
    \sum_{p\in\Z}\norm{\bm g^{[p]}}_2^2= \sum_{n=1}^N\sum_{p\in\Z}\abs{[\bm g^{[p]}]_n}^2 &=\frac1{T_m}\int_0^{T_m}\sum_{n=1}^N\abs{[\bm g(u)]_n}^2du\\&=\frac1{T_m}\int_0^{T_m}\norm{\bm g(u)}_2^2du.
\end{aligned}
\end{equation}
Since the integrand is at most one for every $u$ by the first claim, the time average is at most one, which completes the proof.
\end{IEEEproof}
Lemma~\ref{lem:energy} shows that temporal modulation redistributes the aperture energy across spectral modes rather than adding power: $\sum_q\norm{\bmg\psi_q}_2^2+T_m^{-1}\int_0^{T_m}\norm{\bm e_{\st}(u)}_2^2du\le1$: the desired modes and the unwanted sidebands share the same energy budget. Moreover, a lossless stack has constant energy $\norm{\bm g(u)}_2=1$, whereas a multi-tone field has time-varying energy (Lemma~\ref{lem:peak}); the difference is carried by $\bm e_{\st}(u)$ or absorbed by loss.

\section{Space--Frequency Synthesis and Doppler Compensation}\label{sec:ideal}
\subsection{Minimum-Energy Multiuser Synthesis}
Within one block, collect the steering vectors in $\bm A=[\bm a_1,\ldots,\bm a_K]\in\C^{N\times K}$ and let $\bm R_A\triangleq\bm A^H\bm A$. To cancel the Doppler of terminal $k$, one aperture mode $\bmg\psi_k$ is assigned the temporal factor $e^{-\jmathc2\pi f_{D,k}u}$:
\begin{equation}
\bm g(u)=\bm\Psi\bm d_D(u),\ \
\bm d_D(u)=\big[e^{-\jmathc2\pi f_{D,1}u},\ldots,e^{-\jmathc2\pi f_{D,K}u}\big]^T\!,
\label{eq:dD}
\end{equation}
with $\bm\Psi=[\bmg\psi_1,\ldots,\bmg\psi_K]$. Terminal $k$ must observe only mode $k$, i.e., $\bm a_k^H\bmg\psi_j=\beta_k\delta_{k,j}$, or
\begin{equation}
\bm A^H\bm\Psi=\bm D_\beta\triangleq\diag(\beta_1,\ldots,\beta_K),
\label{eq:decoupling}
\end{equation}
where $\beta_k\ne0$ prescribes the aperture gain toward terminal $k$. Among all decoupling matrices we select the one of minimum modal energy,
\begin{equation}
\mathcal P_1:\quad\min_{\bm\Psi\in\C^{N\times K}}\ \norm{\bm\Psi}_F^2\quad\mathrm{s.t.}\ \bm A^H\bm\Psi=\bm D_\beta,
\label{eq:P1}
\end{equation}
where $\norm{\bm\Psi}_F^2$ is also the long-time average energy of \eqref{eq:dD} for distinct mode frequencies, because the temporal exponentials are orthogonal in time average.

\begin{proposition}[Solution set, minimum-energy solution, and geometry bounds]\label{prop:solution}
Let $N\ge K$ and $\rank(\bm A)=K$. Then \eqref{eq:P1} is feasible, every feasible matrix has the form
\begin{equation}
\bm\Psi=\bm\Psi^\star+\bm P_A^\perp\bm Z,\qquad
\bm\Psi^\star=\bm A\bm R_A^{-1}\bm D_\beta,
\label{eq:Wgeneral}
\end{equation}
with $\bm Z\in\C^{N\times K}$ arbitrary and $\bm P_A^\perp=\bm I_N-\bm A\bm R_A^{-1}\bm A^H$ the orthogonal projector onto $\nulls(\bm A^H)$, and $\bm\Psi^\star$ is the unique minimum-Frobenius-norm solution, with $q_k\triangleq[\bm R_A^{-1}]_{k,k}$,
\begin{equation}
\norm{\bm\Psi^\star}_F^2=\tr\!\big(\bm D_\beta^H\bm R_A^{-1}\bm D_\beta\big)=\sum_{k=1}^Kq_k\abs{\beta_k}^2,
\label{eq:Wenergy}
\end{equation}
\begin{equation}
\frac{\norm{\bm D_\beta}_F^2}{\lambda_{\max}(\bm R_A)}\le\norm{\bm\Psi^\star}_F^2\le\frac{\norm{\bm D_\beta}_F^2}{\lambda_{\min}(\bm R_A)}.
\label{eq:energybounds}
\end{equation}
\end{proposition}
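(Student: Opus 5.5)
The argument is a standard minimum-norm computation for an underdetermined linear system, done column by column. Since $\rank(\bm A)=K$, the Gram matrix $\bm R_A=\bm A^H\bm A$ is Hermitian positive definite, hence invertible, and all quantities in \eqref{eq:Wgeneral} are well defined.

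\emph{Feasibility.} Substituting gives $\bm A^H\bm\Psi^\star=\bm R_A\bm R_A^{-1}\bm D_\beta=\bm D_\beta$. Next check that $\bm P_A^\perp$ is the orthogonal projector onto $\nulls(\bm A^H)$. It is Hermitian and idempotent by direct multiplication, and $\bm A^H\bm P_A^\perp=\bm 0$, so $\range(\bm P_A^\perp)\subseteq\nulls(\bm A^H)$. Conversely, $\bm P_A^\perp\bm x=\bm x$ whenever $\bm A^H\bm x=\bm 0$.

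\emph{Solution set.} If $\bm\Psi$ is feasible, then $\bm\Delta=\bm\Psi-\bm\Psi^\star$ satisfies $\bm A^H\bm\Delta=\bm 0$, so $\bm\Delta=\bm P_A^\perp\bm\Delta$, which is of the form $\bm P_A^\perp\bm Z$ with $\bm Z=\bm\Delta$. Conversely, every matrix $\bm\Psi^\star+\bm P_A^\perp\bm Z$ is feasible because $\bm A^H\bm P_A^\perp=\bm 0$.

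\emph{Minimality and uniqueness.} The columns of $\bm\Psi^\star$ lie in $\range(\bm A)$, and the columns of $\bm P_A^\perp\bm Z$ lie in $\nulls(\bm A^H)=\range(\bm A)^\perp$. Hence $\tr\big((\bm\Psi^\star)^H\bm P_A^\perp\bm Z\big)=\tr\big(\bm D_\beta^H\bm R_A^{-1}\bm A^H\bm P_A^\perp\bm Z\big)=0$, and the Pythagorean identity gives
\begin{equation*}
\norm{\bm\Psi}_F^2=\norm{\bm\Psi^\star}_F^2+\norm{\bm P_A^\perp\bm Z}_F^2 .
\end{equation*}
Equality holds iff $\bm P_A^\perp\bm Z=\bm 0$, which gives uniqueness.

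\emph{Energy formula.} Compute $(\bm\Psi^\star)^H\bm\Psi^\star=\bm D_\beta^H\bm R_A^{-1}\bm R_A\bm R_A^{-1}\bm D_\beta=\bm D_\beta^H\bm R_A^{-1}\bm D_\beta$, using $\bm R_A^{-H}=\bm R_A^{-1}$. Because $\bm D_\beta$ is diagonal, the trace picks out only the diagonal entries of $\bm R_A^{-1}$, namely $q_k\abs{\beta_k}^2$, which gives \eqref{eq:Wenergy}.

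\emph{Bounds \eqref{eq:energybounds}.} Each $q_k=\bm e_k^H\bm R_A^{-1}\bm e_k$ is a Rayleigh quotient of $\bm R_A^{-1}$ at a unit vector, so it lies in $[\lambda_{\min}(\bm R_A^{-1}),\lambda_{\max}(\bm R_A^{-1})]=[1/\lambda_{\max}(\bm R_A),\,1/\lambda_{\min}(\bm R_A)]$. Weighting by $\abs{\beta_k}^2$ and summing, with $\sum_k\abs{\beta_k}^2=\norm{\bm D_\beta}_F^2$, yields both inequalities.

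There is no real obstacle in this argument. The only step needing care is the identification $\range(\bm P_A^\perp)=\nulls(\bm A^H)$, which is what makes the parametrization \eqref{eq:Wgeneral} exhaustive rather than only sufficient. The condition $N\ge K$ is implied by the rank assumption.
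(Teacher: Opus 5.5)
Your proof is correct and follows the paper's argument essentially step for step: feasibility by substitution, the projector $\bm P_A^\perp$ with the converse via $\bm\Psi-\bm\Psi^\star=\bm P_A^\perp(\bm\Psi-\bm\Psi^\star)$, the Pythagorean split with a vanishing cross term, and the trace computation of the energy. The differences are minor. For the bounds you place each diagonal entry $q_k$ of $\bm R_A^{-1}$ between its extreme eigenvalues by a Rayleigh-quotient argument, whereas the paper uses the Loewner order, congruence by $\bm D_\beta$, and monotonicity of the trace; the two are equivalent. You also make explicit the inclusion $\nulls(\bm A^H)\subseteq\range(\bm P_A^\perp)$, which the paper leaves implicit.
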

\begin{IEEEproof}
Since $\rank(\bm A)=K\le N$, the Gram matrix $\bm R_A=\bm A^H\bm A\in\C^{K\times K}$ is Hermitian positive definite and hence invertible, and $\bm A^H$ has full row rank, so its pseudoinverse is the right inverse $(\bm A^H)^\dagger=\bm A\bm R_A^{-1}$. \emph{Feasibility:} $\bm A^H\bm\Psi^\star=\bm A^H\bm A\bm R_A^{-1}\bm D_\beta=\bm D_\beta$, so $\bm\Psi^\star$ satisfies the constraint of \eqref{eq:P1}. \emph{Solution set:} the matrix $\bm P_A^\perp=\bm I_N-\bm A\bm R_A^{-1}\bm A^H$ is Hermitian and idempotent, and $\bm A^H\bm P_A^\perp=\bm A^H-\bm R_A\bm R_A^{-1}\bm A^H=\bm 0$, so it is the orthogonal projector onto $\nulls(\bm A^H)$. Hence $\bm A^H(\bm\Psi^\star+\bm P_A^\perp\bm Z)=\bm D_\beta$ for every $\bm Z$, i.e., every matrix of the form \eqref{eq:Wgeneral} is feasible. Conversely, if $\bm\Psi$ is feasible, then $\bm A^H(\bm\Psi-\bm\Psi^\star)=\bm 0$, so every column of $\bm\Psi-\bm\Psi^\star$ lies in $\nulls(\bm A^H)$ and $\bm\Psi-\bm\Psi^\star=\bm P_A^\perp(\bm\Psi-\bm\Psi^\star)$; choosing $\bm Z=\bm\Psi-\bm\Psi^\star$ shows that $\bm\Psi$ has the form \eqref{eq:Wgeneral}. \emph{Minimum norm and uniqueness:} the columns of $\bm\Psi^\star=\bm A(\bm R_A^{-1}\bm D_\beta)$ lie in $\range(\bm A)$, whereas those of $\bm P_A^\perp\bm Z$ lie in $\nulls(\bm A^H)=\range(\bm A)^\perp$. The cross term therefore vanishes, $(\bm\Psi^\star)^H\bm P_A^\perp\bm Z=\bm D_\beta^H\bm R_A^{-1}\bm A^H\bm P_A^\perp\bm Z=\bm 0$, and
\begin{equation*}
\norm{\bm\Psi}_F^2=\norm{\bm\Psi^\star}_F^2+\norm{\bm P_A^\perp\bm Z}_F^2\ge\norm{\bm\Psi^\star}_F^2,
\end{equation*}
with equality if and only if $\bm P_A^\perp\bm Z=\bm 0$, i.e., $\bm\Psi=\bm\Psi^\star$; thus $\bm\Psi^\star$ is the unique solution of \eqref{eq:P1}. \emph{Energy:} $(\bm\Psi^\star)^H\bm\Psi^\star=\bm D_\beta^H\bm R_A^{-1}\bm A^H\bm A\bm R_A^{-1}\bm D_\beta=\bm D_\beta^H\bm R_A^{-1}\bm D_\beta$, and taking the trace gives the first equality in \eqref{eq:Wenergy}; since $\bm D_\beta$ is diagonal, the $k$th diagonal entry of $\bm D_\beta^H\bm R_A^{-1}\bm D_\beta$ is $\abs{\beta_k}^2[\bm R_A^{-1}]_{k,k}=q_k\abs{\beta_k}^2$, which gives the second equality. \emph{Bounds:} the eigenvalues of $\bm R_A^{-1}$ are the reciprocals of those of $\bm R_A$, so $\lambda_{\max}^{-1}(\bm R_A)\bm I_K\preceq\bm R_A^{-1}\preceq\lambda_{\min}^{-1}(\bm R_A)\bm I_K$. Congruence by $\bm D_\beta$ preserves the Loewner order, which gives $\lambda_{\max}^{-1}(\bm R_A)\bm D_\beta^H\bm D_\beta\preceq\bm D_\beta^H\bm R_A^{-1}\bm D_\beta\preceq\lambda_{\min}^{-1}(\bm R_A)\bm D_\beta^H\bm D_\beta$, and the trace is monotone in this order. With $\tr(\bm D_\beta^H\bm D_\beta)=\norm{\bm D_\beta}_F^2$, this yields \eqref{eq:energybounds}.
\end{IEEEproof}

Equation~\eqref{eq:energybounds} gives terminal geometry a direct physical meaning: when two steering vectors become nearly collinear, $\lambda_{\min}(\bm R_A)\to0$ and the energy needed to separate their frequency translations grows without bound. \emph{Spatial separability and Doppler compensation are therefore fundamentally coupled.} Passivity (Lemma~\ref{lem:energy}) constrains the instantaneous rather than the average energy, which motivates the following definition.
\begin{lemma}[Long-run peak energy of the modal field]\label{lem:peak}
For any $\bm\Psi$ and rationally independent $f_{D,k}$, the trajectory $\bm d_D(u)$ is dense in the phase torus, so that
\begin{equation}
P_{\rm pk}(\bm\Psi)\triangleq\sup_u\norm{\bm\Psi\bm d_D(u)}_2^2=\max_{\bmg\phi\in[0,2\pi)^K}\norm{\bm\Psi e^{\jmathc\bmg\phi}}_2^2
\label{eq:peak}
\end{equation}
satisfies $\norm{\bm\Psi}_F^2\le P_{\rm pk}(\bm\Psi)\le\big(\sum_k\norm{\bmg\psi_k}_2\big)^2\le K\norm{\bm\Psi}_F^2$, with the long-time average equal to $\norm{\bm\Psi}_F^2$.
\end{lemma}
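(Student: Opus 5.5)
The proof splits into four parts: the density claim, the identification of the supremum with a maximum over the torus, the chain of three inequalities, and the long-time average.

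\emph{Density and the supremum.} I would apply Kronecker's theorem to the linear flow $u\mapsto(-2\pi f_{D,1}u,\ldots,-2\pi f_{D,K}u)\bmod 2\pi$. Rational independence of $f_{D,1},\ldots,f_{D,K}$ means that no nonzero integer combination of the frequencies vanishes. The continuous-time Kronecker--Weyl theorem then gives that the orbit of $\bm d_D(u)=e^{\jmathc\bmg\phi(u)}$ is dense in $[0,2\pi)^K$. The map $h(\bmg\phi)=\norm{\bm\Psi e^{\jmathc\bmg\phi}}_2^2$ is continuous on the compact torus, so it attains a maximum there. By density,
\[
\sup_u h(\bmg\phi(u))=\max_{\bmg\phi} h(\bmg\phi):
\]
the bound $\le$ is trivial, and $\ge$ follows because points of the orbit approach the maximizer and $h$ is continuous. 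This establishes \eqref{eq:peak}.

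\emph{Inequality chain.} For the upper bound, the triangle inequality gives $\norm{\sum_k\bmg\psi_k e^{\jmathc\phi_k}}_2\le\sum_k\norm{\bmg\psi_k}_2$ for every $\bmg\phi$, and squaring yields the middle inequality. Cauchy--Schwarz gives $\big(\sum_k\norm{\bmg\psi_k}_2\big)^2\le K\sum_k\norm{\bmg\psi_k}_2^2=K\norm{\bm\Psi}_F^2$. For the lower bound $\norm{\bm\Psi}_F^2\le P_{\rm pk}$, I would compare the maximum with the uniform average over the torus. Expanding,
\[
h(\bmg\phi)=\sum_{k,j}\bmg\psi_k^H\bmg\psi_j e^{\jmathc(\phi_j-\phi_k)}.
\]
Every cross term with $k\neq j$ integrates to zero, so the torus average of $h$ equals $\sum_k\norm{\bmg\psi_k}_2^2=\norm{\bm\Psi}_F^2$. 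A maximum is at least an average, which gives the bound.

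\emph{Long-time average.} The time average of $\norm{\bm\Psi\bm d_D(u)}_2^2$ expands as
\[
\sum_{k,j}\bmg\psi_k^H\bmg\psi_j\,\lim_{T\to\infty}\frac{1}{2T}\int_{-T}^{T}e^{-\jmathc2\pi(f_{D,j}-f_{D,k})u}\,du.
\]
The limit is $\delta_{kj}$ whenever the frequencies are distinct, which rational independence guarantees, so the average equals $\norm{\bm\Psi}_F^2$. Alternatively, Weyl equidistribution yields the same result directly, since the time average then equals the torus average computed above.

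\emph{Main obstacle.} The only nonelementary step is invoking the Kronecker/Weyl theorem with the right hypothesis. For the continuous-time flow, rational independence of the $f_{D,k}$ over $\mathbb{Q}$, i.e.\ $\sum_k n_kf_{D,k}=0$ with $n_k\in\Z$ forcing every $n_k=0$, is exactly the required condition. I would cite the theorem rather than reprove it. I would also note that the inequality bounds and the average formula do not need density: they hold for any distinct frequencies, with $P_{\rm pk}$ read as the torus maximum.
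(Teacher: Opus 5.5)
Your proposal is correct and follows the paper's own (one-line) argument: Kronecker's theorem for density, the triangle and Cauchy--Schwarz inequalities for the upper bounds, and orthogonality of the exponentials for the lower bound and the long-time average. You add the continuity/compactness step that turns density into the identity $\sup_u=\max_{\bmg\phi}$, and you take the lower bound from the torus average rather than the time average, which is equally valid and rests on the same orthogonality.
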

\begin{IEEEproof}
Density follows from Kronecker's theorem; the upper bounds are the triangle and Cauchy--Schwarz inequalities, and the lower bound and the average follow from the orthogonality of the exponentials.
\end{IEEEproof}
The phases of $\beta_k$ do not change $P_{\rm pk}$, since the trajectory sweeps every phase combination. 

\subsection{Perfect Blockwise Multiuser Compensation}
For arbitrary temporal frequencies $\bm\nu=[\nu_1,\ldots,\nu_K]^T$ let $\bm g_{\bm\nu}(u)=\bm\Psi\bm d_{\bm\nu}(u)$ with $\bm d_{\bm\nu}(u)=[e^{-\jmathc2\pi\nu_1u},\ldots,e^{-\jmathc2\pi\nu_Ku}]^T$. Substituting into \eqref{eq:blockrx} and absorbing fixed delay phases into the coefficients yields the \emph{space--frequency coupling identity}
\begin{equation}
\frac{y_k(t)-n_k(t)}{\alpha_kx(t-\tau_k)}=\sum_{j=1}^KH_{k,j}\,e^{\jmathc2\pi(f_{D,k}-\nu_j)t}e^{\jmathc2\pi\nu_j\tau_k},
\label{eq:coupling}
\end{equation}
where $\bm H\triangleq\bm A^H\bm\Psi$. The identity factorizes the two design dimensions: $H_{k,j}$ is the spatial leakage of temporal mode $j$ into terminal $k$, and $f_{D,k}-\nu_j$ is the residual temporal rotation of that mode.

\begin{theorem}[Perfect blockwise multiuser Doppler compensation]\label{thm:perfect}
Let $\beta_k\ne0$ for all $k$, $\rank(\bm A)=K$, and suppose the ST-SIM realizes $\bm g^\star(u)=\bm\Psi^\star\bm d_D(u)$ with $\bm\Psi^\star$ from \eqref{eq:Wgeneral}. Then every terminal has zero residual Doppler and zero inter-mode leakage:
\begin{equation}
y_k(t)=\alpha_k\beta_ke^{\jmathc2\pi f_{D,k}\tau_k}x(t-\tau_k)+n_k(t),\quad k=1,\ldots,K.
\label{eq:perfectrx}
\end{equation}
Conversely, exact decoupling $\bm A^H\bm\Psi=\bm D_\beta$ with a nonsingular diagonal $\bm D_\beta$ requires $\rank(\bm A)=K$; if the $f_{D,k}$ are distinct, the rank condition is also necessary for \eqref{eq:perfectrx} to hold for all $k$.
\end{theorem}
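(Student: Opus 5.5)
The plan has two parts: a direct substitution into the space--frequency coupling identity \eqref{eq:coupling} for the forward claim, and a rank argument for the converse. For the forward claim, set $\bm\nu=\bm f_D=[f_{D,1},\ldots,f_{D,K}]^T$, so that $\bm g^\star(u)=\bm\Psi^\star\bm d_D(u)$ is exactly $\bm g_{\bm\nu}(u)$ with $\bm\Psi=\bm\Psi^\star$. Proposition~\ref{prop:solution} gives $\bm H=\bm A^H\bm\Psi^\star=\bm D_\beta$ because $\rank(\bm A)=K$. In \eqref{eq:coupling} all terms $j\ne k$ then vanish, since $H_{k,j}=0$. The $j=k$ term has residual rotation $f_{D,k}-\nu_k=0$, which leaves $\beta_ke^{\jmathc2\pi f_{D,k}\tau_k}$. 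Multiplying back by $\alpha_kx(t-\tau_k)$ and adding $n_k(t)$ gives \eqref{eq:perfectrx}. I would briefly recheck how \eqref{eq:coupling} arises from \eqref{eq:blockrx}: the block reference $t_0$ is absorbed into $\alpha_k$, and the retarded argument $u=t-\tau_k$ produces the phase $e^{\jmathc2\pi\nu_j\tau_k}$. This confirms that the surviving constant is exactly the one displayed.

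For the first converse, suppose $\bm A^H\bm\Psi=\bm D_\beta$ with $\bm D_\beta$ nonsingular. Then
\begin{equation*}
K=\rank(\bm D_\beta)\le\min\{\rank(\bm A^H),\rank(\bm\Psi)\}\le\rank(\bm A)\le K ,
\end{equation*}
so $\rank(\bm A)=K$; this also forces $N\ge K$.

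For the second converse, assume the ST-SIM realizes some field of the form $\bm\Psi\bm d_D(u)$ and that \eqref{eq:perfectrx} holds for all $k$ on the block. I would equate \eqref{eq:coupling} (with $\bm\nu=\bm f_D$) with the target. This gives, for each $k$ and all $t$ in the block (assuming $x(t-\tau_k)$ is nonzero on a set of positive measure, or working with the noiseless identity),
\begin{equation*}
\sum_{j=1}^KH_{k,j}e^{\jmathc2\pi\nu_j\tau_k}e^{\jmathc2\pi(f_{D,k}-f_{D,j})t}=\beta_ke^{\jmathc2\pi f_{D,k}\tau_k}.
\end{equation*}
Because the $f_{D,j}$ are distinct, the frequencies $f_{D,k}-f_{D,j}$, $j=1,\ldots,K$, are distinct. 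Exponentials with distinct frequencies are linearly independent on any interval of positive length: their Wronskian is a Vandermonde determinant, or equivalently an analytic function vanishing on an interval vanishes identically. Matching coefficients then gives $H_{k,j}=0$ for $j\ne k$ and $H_{k,k}=\beta_k$. Hence $\bm A^H\bm\Psi=\bm D_\beta$ with $\beta_k\ne0$, and the first converse yields $\rank(\bm A)=K$.

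The main obstacle is making this second converse precise. One must decide what class of fields is allowed: I would restrict to modal fields $\bm\Psi\bm d_{\bm\nu}(u)$ with the $\nu_j$ free. One must also handle the case $\nu_j\ne f_{D,j}$, where some $\nu_j$ could coincide and merge modes. The more general treatment takes arbitrary distinct $\nu_j$. The constant term on the right then requires exactly one mode $j$ with $\nu_j=f_{D,k}$, weighted by $H_{k,j}$, with all other $H_{k,i}$ vanishing. Since the $f_{D,k}$ are distinct, this defines an injective map $k\mapsto j$, hence a permutation. After permuting the columns of $\bm\Psi$, we again obtain $\bm A^H\bm\Psi=\bm D_\beta$ up to column order, and the rank argument applies unchanged.
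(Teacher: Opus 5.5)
Your proposal is correct and follows the paper's proof essentially step by step. You substitute $\bm\nu=\bm f_D$ and $\bm H=\bm D_\beta$ into the coupling identity \eqref{eq:coupling} for the forward claim, use $K=\rank(\bm D_\beta)\le\rank(\bm A^H)\le K$ for the first converse, and use linear independence of exponentials with distinct frequencies to force $H_{k,j}=\beta_k\delta_{k,j}$ for the second; your permutation argument for arbitrary distinct $\nu_j$ is a small, valid extension of the paper's proof, which implicitly fixes $\nu_j=f_{D,j}$.
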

\begin{IEEEproof}
By construction $H_{k,j}=\beta_k\delta_{k,j}$; choosing $\nu_j=f_{D,j}$ in \eqref{eq:coupling} leaves only the $j=k$ term with unit temporal factor, which is \eqref{eq:perfectrx}. Conversely, $K=\rank(\bm D_\beta)\le\rank(\bm A^H)\le K$ forces $\rank(\bm A)=K$. If the $f_{D,k}$ are distinct, the exponentials in \eqref{eq:coupling} are linearly independent functions of $t$, so \eqref{eq:perfectrx} forces $H_{k,j}=\beta_k\delta_{k,j}$ for every $k$ and $j$, i.e., exact decoupling, and the previous argument applies.
\end{IEEEproof}

Theorem~\ref{thm:perfect} identifies the sufficient aperture field and the rank condition for its existence; whether a finite-depth phase-only stack synthesizes it is addressed in Section~\ref{sec:practical}.
\begin{remark}[Grouped compensation]\label{rem:grouped}
The rank condition concerns $K$ mutually decoupled translations. If the terminals are partitioned into $G\le K$ groups $\mathcal K_q$ sharing one tone $\nu_q$ and one multicast mode $\bmg\psi_q$ each, decoupling requires $\bm A^H\bm\Psi=\bm D_\beta\bm E$ with the group-indicator matrix $\bm E\in\{0,1\}^{K\times G}$, which is consistent whenever $\rank(\bm A)=K$ and possibly otherwise (e.g., for collinear terminals in one group), with minimum-norm solution $(\bm A^H)^\dagger\bm D_\beta\bm E$, i.e., $\bmg\psi_q=\sum_{k\in\mathcal K_q}\bmg\psi_k^\star$ under full rank. By Proposition~\ref{prop:common} per group, each terminal keeps at most half the Doppler span of its group; $G=1$ is the common translation and $G=K$ is Theorem~\ref{thm:perfect} (Section~\ref{sec:res_group}). Terminals should thus be grouped contiguously in Doppler, i.e., by along-track position: a group of Doppler span $\Delta_q$ keeps a residual $\epsilon=\Delta_q/(2\Delta f)$ with interference-limited SIR $\approx3/(\pi^2\epsilon^2)$ (Proposition~\ref{prop:ici}), whereas every additional group costs modal energy (Proposition~\ref{prop:solution}), a trade-off evaluated in Section~\ref{sec:res_group}.
\end{remark}

\subsection{Irreducible Residual of a Common Frequency Translation}
A conventional payload with one RF chain and one mixer, a static SIM, or a time-modulated surface that applies one translation to all beams \cite{He2015,Zhu2024}, imposes a single translation $\nu_c$ on the common waveform, with worst-terminal residual $\epsilon_\infty(\nu_c)=\max_k\abs{f_{D,k}-\nu_c}$.
\begin{proposition}[Optimal common precompensation]\label{prop:common}
With $f_D^{\min}=\min_kf_{D,k}$ and $f_D^{\max}=\max_kf_{D,k}$, the minimax translation and residual are
\begin{equation}
\nu_c^\star=\tfrac12\big(f_D^{\max}+f_D^{\min}\big),\qquad
\epsilon_\infty^\star=\tfrac12\big(f_D^{\max}-f_D^{\min}\big),
\label{eq:spanbound}
\end{equation}
and the least-squares translation is $\nu_c^{\rm LS}=K^{-1}\sum_kf_{D,k}$. Both minimize a residual rather than the SINR; the SINR-optimal translation $\nu_c^{\rm SINR}=\arg\max_{\nu_c}\min_k\gamma_k(\nu_c)$, a one-dimensional search over $[f_D^{\min},f_D^{\max}]$, is evaluated in Section~\ref{sec:res_link}.
\end{proposition}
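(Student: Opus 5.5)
The plan is to treat the worst-terminal residual $\epsilon_\infty(\nu_c)=\max_k\abs{f_{D,k}-\nu_c}$ as a one-dimensional convex function of $\nu_c$ and to prove the minimax claim by matching a lower bound with an achieving point. The first step is to observe that, since $f_{D,k}\in[f_D^{\min},f_D^{\max}]$ for every $k$, the maximum of $\abs{f_{D,k}-\nu_c}$ over $k$ is always attained at one of the two extreme Doppler values. This gives
\begin{equation*}
\epsilon_\infty(\nu_c)=\max\big\{f_D^{\max}-\nu_c,\ \nu_c-f_D^{\min}\big\}.
\end{equation*}
For $\nu_c$ inside the interval this is immediate. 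Outside the interval, one of the two terms equals the distance to the farther endpoint, and that distance dominates every $\abs{f_{D,k}-\nu_c}$.

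The second step is the lower bound. The maximum of two numbers is at least their average, so
\begin{equation*}
\epsilon_\infty(\nu_c)\ge\tfrac12\big[(f_D^{\max}-\nu_c)+(\nu_c-f_D^{\min})\big]=\tfrac12\big(f_D^{\max}-f_D^{\min}\big)
\end{equation*}
for every $\nu_c\in\R$. This bound does not depend on $\nu_c$. Equality holds if and only if the two terms are equal, which happens exactly at $\nu_c=\nu_c^\star$, the midpoint. Hence $\nu_c^\star$ is the unique minimizer and $\epsilon_\infty^\star$ is half the span, which is \eqref{eq:spanbound}.

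The third step handles the least-squares claim. I would minimize $J(\nu_c)=\sum_k\abs{f_{D,k}-\nu_c}^2$, which is a strictly convex quadratic. Setting $J'(\nu_c)=-2\sum_k(f_{D,k}-\nu_c)=0$ gives the sample mean $\nu_c^{\rm LS}=K^{-1}\sum_kf_{D,k}$.

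The remaining sentence about $\nu_c^{\rm SINR}$ is a definition plus a search interval rather than a claim, so only the restriction to $[f_D^{\min},f_D^{\max}]$ needs a justification. For $\nu_c$ outside this interval, moving it to the nearest endpoint strictly reduces every residual $\abs{f_{D,k}-\nu_c}$. This would raise every $\gamma_k$ provided $\gamma_k$ is decreasing in $\abs{f_{D,k}-\nu_c}$. That monotonicity is not yet established in the paper and only appears with Proposition~\ref{prop:ici}. I therefore expect this to be the only delicate point, and would state it conditionally on the residual-to-SINR map being monotone, for example on the main lobe of the sinc-shaped ICI, $\abs{\epsilon}<1$. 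Everything else is elementary.
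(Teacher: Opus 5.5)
Your proof is correct and is essentially the paper's: both reduce $\epsilon_\infty(\nu_c)$ to $\max\{f_D^{\max}-\nu_c,\ \nu_c-f_D^{\min}\}$ and place the unique minimizer at the midpoint, and both obtain $\nu_c^{\rm LS}$ from the strictly convex quadratic; your bound $\max\{a,b\}\ge\tfrac12(a+b)$ plays the role of the paper's remark that $\nu_c^\star+\delta$ costs $\epsilon_\infty^\star+\abs{\delta}$. The paper's proof does not justify the $[f_D^{\min},f_D^{\max}]$ search interval for $\nu_c^{\rm SINR}$ and treats it as part of the definition, so your conditional remark is extra care rather than a gap, and the caution is warranted because $\abs{C_0(\epsilon)}^2$ is not monotone in $\abs{\epsilon}$ beyond the main lobe.
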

\begin{IEEEproof}
Since the maximum of finitely many absolute values is attained at an extreme point, $\epsilon_\infty(\nu_c)=\max\{f_D^{\max}-\nu_c,\ \nu_c-f_D^{\min}\}$, a convex piecewise-linear function with slopes $-1$ and $+1$ whose minimum is attained where the two arguments coincide, i.e., at the midpoint $\nu_c^\star$, with value $\epsilon_\infty^\star$; any $\nu_c=\nu_c^\star+\delta$ incurs $\epsilon_\infty^\star+\abs\delta$, so the minimizer is unique. The least-squares criterion $\sum_k(f_{D,k}-\nu_c)^2$ is strictly convex, and its derivative vanishes at $\nu_c^{\rm LS}$.
\end{IEEEproof}
Whenever the Doppler span is nonzero, no common translation makes every residual vanish; Theorem~\ref{thm:perfect} removes this floor by pairing a different translation with each spatial mode, and the comparison is architectural rather than algorithmic. Since the payload contains a frequency synthesizer, the numerical results apply $\nu_c^\star$ to the feed and let the ST-SIM synthesize only the differential tones $f_{D,k}-\nu_c^\star$; all results hold for arbitrary tone frequencies, and this hybrid mode reduces the largest synthesized tone from $\max_k\abs{f_{D,k}}$ to $\epsilon_\infty^\star$, by a factor of $1.05$ in the baseline geometry and more for clustered terminals.

\subsection{OFDM Consequences and Residual Errors}\label{sec:ofdm}
Consider $M$ subcarriers with spacing $\Delta f$ and useful duration $T=1/\Delta f$, and let terminal $k$ have residual frequency error $\delta f_k=f_{D,k}-\nu_k$ with normalized value $\epsilon_k=\delta f_k/\Delta f$. After cyclic-prefix removal, the DFT-domain relation is $Y_k[p]=h_k\sum_qC_{p-q}(\epsilon_k)X_k[q]+N_k[p]$ with
\begin{equation}
C_r(\epsilon)=\frac1M\sum_{m=0}^{M-1}e^{\jmathc2\pi(\epsilon-r)\frac mM}
=\frac{e^{\jmathc\pi\frac{(M-1)(\epsilon-r)}{M}}}{M}\frac{\sin[\pi(\epsilon-r)]}{\sin[\frac{\pi(\epsilon-r)}M]}.
\label{eq:Crclosed}
\end{equation}
The CFO matrix is unitary, so $\sum_r\abs{C_r(\epsilon)}^2=1$, and under Theorem~\ref{thm:perfect} $\epsilon_k=0$, $C_0(0)=1$, and the carrier-Doppler-induced ICI vanishes for every terminal. For imperfect compensation, the fraction of desired-mode power leaving the intended subcarrier is
\begin{equation}
\chi_M(\epsilon)\triangleq1-\abs{C_0(\epsilon)}^2=1-\left|\frac{\sin(\pi\epsilon)}{M\sin(\pi\epsilon/M)}\right|^2.
\label{eq:ici}
\end{equation}
\begin{proposition}[Local ICI law]\label{prop:ici}
As $\epsilon\to0$, $\chi_M(\epsilon)=\frac{\pi^2}{3}\big(1-\frac1{M^2}\big)\epsilon^2+\mathcal O(\epsilon^4)$.
\end{proposition}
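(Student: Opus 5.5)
Our approach is a Taylor expansion of the Dirichlet-kernel magnitude $\abs{C_0(\epsilon)}^2=\big(\sin(\pi\epsilon)/(M\sin(\pi\epsilon/M))\big)^2$ around $\epsilon=0$. The ratio is an even, real-analytic function of $\epsilon$ near the origin: the denominator $\sin(\pi\epsilon/M)$ vanishes only at $\epsilon=0$ in a neighbourhood, and that zero is cancelled by the zero of the numerator. Hence the expansion contains only even powers, and the remainder after the $\epsilon^2$ term is automatically $\mathcal O(\epsilon^4)$. This disposes of the error order once the quadratic coefficient is identified.

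\textbf{Step 1.} Write $x=\pi\epsilon$ and use $\sin x = x\big(1-x^2/6+\mathcal O(x^4)\big)$. This gives $\sin(\pi\epsilon)=\pi\epsilon\big(1-\pi^2\epsilon^2/6+\mathcal O(\epsilon^4)\big)$ and $M\sin(\pi\epsilon/M)=\pi\epsilon\big(1-\pi^2\epsilon^2/(6M^2)+\mathcal O(\epsilon^4)\big)$.

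\textbf{Step 2.} Divide the two expansions. The factor $\pi\epsilon$ cancels. Using $(1-a)/(1-b)=1-(a-b)+\mathcal O(\epsilon^4)$ for $a,b=\mathcal O(\epsilon^2)$, we obtain $\abs{C_0(\epsilon)}=1-\frac{\pi^2\epsilon^2}{6}\big(1-\frac1{M^2}\big)+\mathcal O(\epsilon^4)$.

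\textbf{Step 3.} Square this and subtract from one, as in \eqref{eq:ici}. This yields $\chi_M(\epsilon)=\frac{\pi^2}{3}\big(1-\frac1{M^2}\big)\epsilon^2+\mathcal O(\epsilon^4)$.

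\textbf{Cross-check (optional).} Alternatively, start from the sum form \eqref{eq:Crclosed}: $C_0(\epsilon)=\frac1M\sum_m e^{\jmathc2\pi\epsilon m/M}$. Then $\abs{C_0}^2=1-(2\pi\epsilon)^2\operatorname{Var}(m/M)+\mathcal O(\epsilon^4)$, where $m$ is uniform on $\{0,\ldots,M-1\}$. Since $\operatorname{Var}(m)=(M^2-1)/12$, the quadratic coefficient is again $\frac{4\pi^2}{12}(1-1/M^2)$.

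\textbf{Main obstacle.} The only delicate point is justifying the uniform $\mathcal O(\epsilon^4)$ remainder. Evenness and analyticity near $0$, both of which follow from the closed form, settle this. The remaining algebra is routine.
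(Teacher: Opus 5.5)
Your proposal is correct and follows the paper's proof. Both expand $\sin x = x - x^3/6 + \mathcal O(x^5)$ in the ratio $\sin(\pi\epsilon)/[M\sin(\pi\epsilon/M)]$ to get $1-\frac{\pi^2}{6}(1-M^{-2})\epsilon^2+\mathcal O(\epsilon^4)$, then square and subtract from one. The evenness remark and the variance cross-check are valid but unnecessary, since the explicit $\mathcal O(x^5)$ remainder already gives the $\mathcal O(\epsilon^4)$ error.
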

\begin{IEEEproof}
Expand $\sin x=x-x^3/6+\mathcal O(x^5)$ in \eqref{eq:ici}: $\sin(\pi\epsilon)/[M\sin(\pi\epsilon/M)]=1-\frac{\pi^2}{6}(1-M^{-2})\epsilon^2+\mathcal O(\epsilon^4)$; square and subtract from one.
\end{IEEEproof}
Since the common-compensation residual scales linearly with $f_c$ and the ICI fraction quadratically with the residual normalized by $\Delta f$, differential Doppler becomes critical as $f_c/\Delta f$ grows (Section~\ref{sec:res_link}). Two residuals remain after carrier-Doppler compensation. First, the envelope in \eqref{eq:exactrx} is time-scaled by $1+\alpha_k$, $\alpha_k\approx f_{D,k}/f_c$, which leaves the band-averaged ICI fraction $\bar\chi_k\approx(\pi^2/36)(Mf_{D,k}/f_c)^2$, i.e., $-47$ and $-38$~dB at $160$ and $450$~kHz; this sampling-clock-offset effect is common to all carrier-Doppler schemes and is removed by the timing tracking of an OFDM receiver. Second, ephemeris and terminal-position errors leave a residual tone. With the first-order model $f_{D,k}=-(f_c/c)\bm u_k^T\bm v_k^{\rm rel}$, suppose the controller uses $\widehat{\bm\rho}_k=\bm\rho_k+\delta\bm\rho_k$ and $\widehat{\bm v}_k^{\rm rel}=\bm v_k^{\rm rel}+\delta\bm v_k^{\rm rel}$ to predict $\widehat f_{D,k}$, with $\bm\rho_k=d_k\bm u_k$ the range vector, and let $\bm P_k^\perp=\bm I_3-\bm u_k\bm u_k^T$.
\begin{proposition}[Residual Doppler under state errors]\label{prop:ephemeris}
If $\norm{\delta\bm\rho_k}_2<d_k$, then
\begin{equation}
\abs{\widehat f_{D,k}-f_{D,k}}\le\frac{f_c}{c}\left[\norm{\delta\bm v_k^{\rm rel}}_2+\frac{2\norm{\bm v_k^{\rm rel}}_2\norm{\delta\bm\rho_k}_2}{d_k-\norm{\delta\bm\rho_k}_2}\right],
\label{eq:nonasymfd}
\end{equation}
and, to first order, $\delta f_{D,k}\triangleq\widehat f_{D,k}-f_{D,k}=-\frac{f_c}{c}\big[\bm u_k^T\delta\bm v_k^{\rm rel}+d_k^{-1}(\bm v_k^{\rm rel})^T\bm P_k^\perp\delta\bm\rho_k\big]$.
\end{proposition}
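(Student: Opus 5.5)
We write $\bm u_k=\bm\rho_k/\norm{\bm\rho_k}_2$ and $\widehat{\bm u}_k=\widehat{\bm\rho}_k/\norm{\widehat{\bm\rho}_k}_2$, so that
\begin{equation*}
f_{D,k}=-(f_c/c)\,\bm u_k^T\bm v_k^{\rm rel},\qquad
\widehat f_{D,k}=-(f_c/c)\,\widehat{\bm u}_k^T\widehat{\bm v}_k^{\rm rel}.
\end{equation*}
The proof adds and subtracts $\widehat{\bm u}_k^T\bm v_k^{\rm rel}$, which gives the exact decomposition
\begin{equation*}
\widehat f_{D,k}-f_{D,k}=-\frac{f_c}{c}\Big[\widehat{\bm u}_k^T\delta\bm v_k^{\rm rel}+(\widehat{\bm u}_k-\bm u_k)^T\bm v_k^{\rm rel}\Big].
\end{equation*}
The first term is bounded by Cauchy--Schwarz and $\norm{\widehat{\bm u}_k}_2=1$, giving $\norm{\delta\bm v_k^{\rm rel}}_2$. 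The second term is at most $\norm{\bm v_k^{\rm rel}}_2\norm{\widehat{\bm u}_k-\bm u_k}_2$. The nonasymptotic bound \eqref{eq:nonasymfd} therefore follows from a perturbation bound on the unit vector.

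\textbf{Unit-vector perturbation.} We need $\norm{\widehat{\bm u}_k-\bm u_k}_2\le 2\norm{\delta\bm\rho_k}_2/(d_k-\norm{\delta\bm\rho_k}_2)$. The hypothesis $\norm{\delta\bm\rho_k}_2<d_k$ ensures $\widehat{\bm\rho}_k\neq\bm 0$, so $\widehat{\bm u}_k$ is well defined. Write $\widehat d_k=\norm{\widehat{\bm\rho}_k}_2$. Then
\begin{equation*}
\widehat{\bm u}_k-\bm u_k=\frac{\widehat{\bm\rho}_k-\bm\rho_k}{\widehat d_k}+\bm\rho_k\Big(\frac{1}{\widehat d_k}-\frac{1}{d_k}\Big),
\end{equation*}
so
\begin{equation*}
\norm{\widehat{\bm u}_k-\bm u_k}_2\le\frac{\norm{\delta\bm\rho_k}_2}{\widehat d_k}+\frac{\abs{d_k-\widehat d_k}}{\widehat d_k}\le\frac{2\norm{\delta\bm\rho_k}_2}{\widehat d_k}.
\end{equation*}
The last step uses the reverse triangle inequality $\abs{d_k-\widehat d_k}\le\norm{\delta\bm\rho_k}_2$. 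The same inequality gives $\widehat d_k\ge d_k-\norm{\delta\bm\rho_k}_2>0$, which produces the stated denominator. The only place where care is needed is choosing this split, with normalization by $\widehat d_k$, so that the constant $2$ and the denominator $d_k-\norm{\delta\bm\rho_k}_2$ come out exactly.

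\textbf{First-order expression.} We linearize the map $\bm\rho\mapsto\bm\rho/\norm{\bm\rho}_2$ at $\bm\rho_k$. Its Jacobian is $d_k^{-1}(\bm I_3-\bm u_k\bm u_k^T)=d_k^{-1}\bm P_k^\perp$, so
\begin{equation*}
\widehat{\bm u}_k-\bm u_k=d_k^{-1}\bm P_k^\perp\delta\bm\rho_k+\mathcal O\big(\norm{\delta\bm\rho_k}_2^2/d_k^2\big).
\end{equation*}
Substituting this into the exact decomposition, and using $\widehat{\bm u}_k^T\delta\bm v_k^{\rm rel}=\bm u_k^T\delta\bm v_k^{\rm rel}+\mathcal O(\norm{\delta\bm\rho_k}_2\norm{\delta\bm v_k^{\rm rel}}_2)$, the products of errors are second order. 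Keeping only linear terms gives
\begin{equation*}
\delta f_{D,k}=-\frac{f_c}{c}\Big[\bm u_k^T\delta\bm v_k^{\rm rel}+d_k^{-1}(\bm v_k^{\rm rel})^T\bm P_k^\perp\delta\bm\rho_k\Big],
\end{equation*}
where the symmetry of $\bm P_k^\perp$ allows the transpose in the second term.
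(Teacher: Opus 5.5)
Your proof is correct and follows the paper's argument: the same add-and-subtract decomposition $(\widehat{\bm u}_k-\bm u_k)^T\bm v_k^{\rm rel}+\widehat{\bm u}_k^T\delta\bm v_k^{\rm rel}$, Cauchy--Schwarz on both terms, the reverse-triangle-inequality bound $\norm{\widehat{\bm u}_k-\bm u_k}_2\le2\norm{\delta\bm\rho_k}_2/(d_k-\norm{\delta\bm\rho_k}_2)$, and the Jacobian $d_k^{-1}\bm P_k^\perp$ of the normalization map for the first-order form. Your split normalized by $\widehat d_k$ fills in the step the paper leaves implicit. The only blemish is cosmetic: the cross term $(\widehat{\bm u}_k-\bm u_k)^T\delta\bm v_k^{\rm rel}$ is $\mathcal O(\norm{\delta\bm\rho_k}_2\norm{\delta\bm v_k^{\rm rel}}_2/d_k)$, so your stated order omits the factor $1/d_k$.
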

\begin{IEEEproof}
With $\widehat{\bm u}_k=(\bm\rho_k+\delta\bm\rho_k)/\norm{\bm\rho_k+\delta\bm\rho_k}_2$, $\widehat f_{D,k}-f_{D,k}=-(f_c/c)[(\widehat{\bm u}_k-\bm u_k)^T\bm v_k^{\rm rel}+\widehat{\bm u}_k^T\delta\bm v_k^{\rm rel}]$; the reverse triangle inequality gives $\norm{\widehat{\bm u}_k-\bm u_k}_2\le2\norm{\delta\bm\rho_k}_2/(d_k-\norm{\delta\bm\rho_k}_2)$, and Cauchy--Schwarz yields \eqref{eq:nonasymfd}. The first-order form follows from $d\bm u_k=d_k^{-1}\bm P_k^\perp d\bm\rho_k$ and the product rule.
\end{IEEEproof}
If the ST-SIM applies $\nu_k=\widehat f_{D,k}$, the residual CFO is $-\delta f_{D,k}$ and, by Proposition~\ref{prop:ici}, the ICI fraction is $\frac{\pi^2}{3}(1-M^{-2})(\delta f_{D,k}/\Delta f)^2$, which exposes the critical factor $f_c/\Delta f$ of high-frequency LEO OFDM links. Only the \emph{transverse} component of a position error matters, scaled by $\norm{\bm v^{\rm rel}}_2/d_k$.

\section{Phase-Only ST-SIM Synthesis}\label{sec:practical}
The ideal field of Theorem~\ref{thm:perfect} need not belong to the set of fields realizable by a finite-depth phase-only stack. This section formulates the physical synthesis problem and derives an efficient first-order method with explicit guarantees.

\subsection{Trajectory Synthesis Problem}
Sample a predictable control horizon at transmit times $u_r=u_0+rT_u$, $r=0,\ldots,R-1$, where $f_u=1/T_u$ is the control update rate, and let $\bm g_r^\star$ be the desired field generated by \eqref{eq:dD}; since $\bm\Psi^\star$ has full column rank and $\bm d_D(u)\ne\bm 0$, $\norm{\bm g_r^\star}_2>0$ for every $r$. Let $\bm\theta[r]\in\R^{LN}$ collect all layer phases at slot $r$ and $\Theta=[\bm\theta[0],\ldots,\bm\theta[R-1]]$. A passive stack cannot reach an arbitrary absolute amplitude (Lemma~\ref{lem:energy}), so a common complex scaling $\alpha$, which absorbs the transmission gain of the stack, is optimized jointly with the phases. The finite-resolution design problem is
\begin{equation}
\begin{aligned}
\mathcal P_2:\ \min_{\Theta,\alpha}\ &\mathcal J(\Theta,\alpha)=\frac1R\sum_{r=0}^{R-1}\omega_r\frac{\norm{\alpha\bm G(\bm\theta[r])\bm f-\bm g_r^\star}_2^2}{\norm{\bm g_r^\star}_2^2}\\
&\qquad\qquad+\lambda_s\mathcal S(\Theta)\\
\mathrm{s.t.}\ &\theta_{l,n}[r]\in\Qphi,
\end{aligned}
\label{eq:P2}
\end{equation}
where $\Qphi=\{2\pi q/2^{B_\phi}\}_{q=0}^{2^{B_\phi}-1}$ is the $B_\phi$-bit phase alphabet, $\omega_r>0$ are trajectory weights, and
\begin{equation}
\mathcal S(\Theta)=\frac1{R-1}\sum_{r=1}^{R-1}\sum_{i=1}^{LN}\Big[1-\cos\big(\theta_i[r]-\theta_i[r-1]\big)\Big]
\label{eq:switchreg}
\end{equation}
penalizes switching between consecutive slots. The penalty \eqref{eq:switchreg} is $2\pi$-periodic and smooth, unlike a penalty on wrapped phase differences, so it does not affect the differentiability assumed below; its gradient is $\partial\mathcal S/\partial\theta_i[r]=\frac{1}{R-1}[\sin(\theta_i[r]-\theta_i[r-1])-\sin(\theta_i[r+1]-\theta_i[r])]$. A hard slew limit, if required by the drivers, can be imposed afterwards by projecting each increment onto $[-\Delta_{\max},\Delta_{\max}]$. Problem~\eqref{eq:P2} is nonconvex because of the cascaded phase products and the discrete alphabet. We study its continuous relaxation, then quantify the effect of quantization.

\subsection{Optimal Scaling and Reusable Multilayer Gradient}
For one slot (index $r$ suppressed), define the field incident on layer $l$ and the transfer from the output of layer $l$ to the aperture, with $\bm u_1=\bm f$,
\begin{equation}
\bm u_l=\bm W_l\bm\Phi_{l-1}\bm u_{l-1},\ \
\bm V_l=\begin{cases}\bm I_N,&l=L,\\ \bm\Phi_L\bm W_L\cdots\bm\Phi_{l+1}\bm W_{l+1},&l<L,\end{cases}
\label{eq:fwdbwd}
\end{equation}
so that $\bm g=\bm V_l\bm\Phi_l\bm u_l$ for every $l$.
\begin{lemma}[Optimal scaling and phase derivatives]\label{lem:grad}
For fixed $\Theta$, with $\bm g_r=\bm G(\bm\theta[r])\bm f$ and $w_r=\omega_r/\norm{\bm g_r^\star}_2^2$, the minimizer of $\mathcal J$ over $\alpha$ is
\begin{equation}
\alpha^\star(\Theta)=\frac{\sum_rw_r\bm g_r^H\bm g_r^\star}{\sum_rw_r\norm{\bm g_r}_2^2},
\label{eq:alphastar}
\end{equation}
and $\tilde{\mathcal J}(\Theta)\triangleq\mathcal J(\Theta,\alpha^\star(\Theta))$ has gradient $\nabla_\Theta\tilde{\mathcal J}=\nabla_\Theta\mathcal J|_{\alpha=\alpha^\star}$. For meta-atom $n$ of layer $l$, $\partial\bm g/\partial\theta_{l,n}=\jmathc e^{\jmathc\theta_{l,n}}[\bm u_l]_n\bm V_l\bm e_n$, and for $F(\bm\theta)=\norm{\alpha\bm g(\bm\theta)-\bm g^\star}_2^2$,
\begin{equation}
\frac{\partial F}{\partial\theta_{l,n}}=2\Re\!\left\{\jmathc e^{\jmathc\theta_{l,n}}[\bm u_l]_n\,[\bm V_l^H\bm\varepsilon]_n^*\right\},\quad
\bm\varepsilon\triangleq\alpha^*(\alpha\bm g-\bm g^\star).
\label{eq:objectivegradient}
\end{equation}
\end{lemma}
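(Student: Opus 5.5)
The plan splits the statement into three claims: the closed-form scaling, the envelope-type gradient identity, and the per-atom derivatives via the forward/backward factorization \eqref{eq:fwdbwd}.

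\emph{Optimal scaling.} For fixed $\Theta$, the smoothness penalty $\lambda_s\mathcal S(\Theta)$ does not depend on $\alpha$. The data term is
\begin{equation*}
\frac1R\sum_r w_r\norm{\alpha\bm g_r-\bm g_r^\star}_2^2,
\end{equation*}
a real quadratic in $(\alpha,\alpha^*)$. Expanding, it equals $\abs{\alpha}^2 a-2\Re\{\alpha^* b\}+c$, with $a=\sum_rw_r\norm{\bm g_r}_2^2$ and $b=\sum_rw_r\bm g_r^H\bm g_r^\star$ (up to the common factor $1/R$). When $a>0$ this is strictly convex, and completing the square, or setting the Wirtinger derivative $\partial/\partial\alpha^*$ to zero, gives $\alpha^\star=b/a$, which is \eqref{eq:alphastar}. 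Here $a>0$ holds whenever some $\bm g_r\neq\bm 0$; for unitary $\bm W_l$ this is automatic by Lemma~\ref{lem:energy}, and otherwise I would state it as an assumption. Since $\alpha^\star$ is a smooth (rational) function of $\Theta$ on this set, the envelope theorem applies. By the chain rule, $\nabla_\Theta\tilde{\mathcal J}=\nabla_\Theta\mathcal J+(\partial\mathcal J/\partial\alpha)\nabla_\Theta\alpha^\star+(\partial\mathcal J/\partial\alpha^*)\nabla_\Theta\alpha^{\star*}$, and both partial derivatives in $\alpha$ vanish at $\alpha^\star$. This step is the one needing care: $\alpha$ is complex, so the bookkeeping must be done in Wirtinger (or real/imaginary) coordinates and must use stationarity in both $\Re\alpha$ and $\Im\alpha$.

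\emph{Field derivative.} Fix $l$ and $n$. By \eqref{eq:fwdbwd}, $\bm g=\bm V_l\bm\Phi_l\bm u_l$, where neither $\bm V_l$ nor $\bm u_l$ depends on $\theta_{l,n}$: $\bm u_l$ involves only layers $1,\ldots,l-1$, and $\bm V_l$ only layers $l+1,\ldots,L$. This is verified by induction on the definitions. Because the layer is diagonal, $\partial\bm\Phi_l/\partial\theta_{l,n}=\jmathc e^{\jmathc\theta_{l,n}}\bm e_n\bm e_n^T$. Hence $\partial\bm g/\partial\theta_{l,n}=\jmathc e^{\jmathc\theta_{l,n}}\bm V_l\bm e_n\bm e_n^T\bm u_l=\jmathc e^{\jmathc\theta_{l,n}}[\bm u_l]_n\bm V_l\bm e_n$.

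\emph{Objective derivative.} For the real parameter $\theta_{l,n}$ and fixed $\alpha$, $\partial F/\partial\theta=2\Re\{(\alpha\bm g-\bm g^\star)^H\alpha\,\partial\bm g/\partial\theta\}$, by differentiating $\bm r^H\bm r$ with $\bm r=\alpha\bm g-\bm g^\star$. Substituting the field derivative gives
\begin{equation*}
\bm r^H\alpha\bm V_l\bm e_n=(\alpha^*\bm r)^H\bm V_l\bm e_n=\big(\bm V_l^H\bm\varepsilon\big)_n^*,
\end{equation*}
using $\bm\varepsilon=\alpha^*\bm r$. This yields \eqref{eq:objectivegradient}. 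The result holds slot by slot, and summing with the weights $w_r$ assembles $\nabla_\Theta\mathcal J$. The computation is mechanical once the envelope step is justified.
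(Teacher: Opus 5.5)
Your proposal is correct and follows the paper's own proof. Both arguments use the same steps: convexity of the quadratic in $\alpha$ with the stationarity condition $\partial\mathcal J/\partial\alpha^*=0$ (hence also $\partial\mathcal J/\partial\alpha=0$, since $\mathcal J$ is real), the envelope argument for $\nabla_\Theta\tilde{\mathcal J}$, the diagonal derivative $\partial\bm\Phi_l/\partial\theta_{l,n}=\jmathc e^{\jmathc\theta_{l,n}}\bm e_n\bm e_n^T$, and the differential $dF=2\Re\{(\alpha\bm g-\bm g^\star)^H\alpha\,d\bm g\}$.

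One refinement: $a=\sum_rw_r\norm{\bm g_r}_2^2>0$ need not be stated as an extra assumption. The paper takes the $\bm W_l$ nonsingular, so $\bm G(\bm\theta[r])$ is nonsingular and $\bm g_r\ne\bm 0$ whenever $\bm f\ne\bm 0$.
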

\begin{IEEEproof}
$\mathcal J$ is a convex quadratic in $\alpha$ whose stationary point is \eqref{eq:alphastar}; the gradient identity is the envelope theorem, since $\partial\mathcal J/\partial\alpha^*=0$ at $\alpha^\star$. $\partial\bm\Phi_l/\partial\theta_{l,n}=\jmathc e^{\jmathc\theta_{l,n}}\bm e_n\bm e_n^T$ gives the derivative of $\bm g$, and $dF=2\Re\{(\alpha\bm g-\bm g^\star)^H\alpha\,d\bm g\}$ gives \eqref{eq:objectivegradient}.
\end{IEEEproof}
All $LN$ derivatives of a slot follow from one forward sweep for $\{\bm u_l\}$ and one backward sweep for $\{\bm V_l^H\bm\varepsilon\}$, i.e., $2(L-1)$ matrix--vector products of size $N$, without differentiating the full cascade for every meta-atom.

\subsection{Algorithm and Convergence}
Algorithm~\ref{alg:synthesis} summarizes the proposed synthesis. Step~1 (lines 2--8) performs wrapped gradient descent on the continuous relaxation, where wrapping only changes the coordinate representation of a point on the phase torus; the trial step is the Barzilai--Borwein estimate of the inverse local curvature \cite{Barzilai1988}, safeguarded to $[\mu_{\min},\mu_{\max}]$ and then reduced by Armijo backtracking until sufficient decrease is achieved. Step~2 (line 9) projects onto $\Qphi$, and Step~3 (lines 10--12) performs quantization-aware coordinate refinement, in which each meta-atom is assigned the best of its $2^{B_\phi}$ states given all others, using $\bm V_l$ and $\bm u_l$ so that each trial costs $\mathcal O(N)$. With $\alpha^\star$ fixed, the coordinate cost of $\theta_{l,n}[r]$ collects all terms of $\mathcal J$ that depend on it,
\begin{equation}
\begin{aligned}
\mathcal J_{l,n,r}(\phi)&=\tfrac{w_r}{R}\big\|\alpha^\star\bm V_l(\bm\Phi_l\bm u_l+\Delta_\phi[\bm u_l]_n\bm e_n)-\bm g_r^\star\big\|_2^2\\
&\quad+\tfrac{\lambda_s}{R-1}\big[2-\cos(\phi-\widehat\theta^-)-\cos(\widehat\theta^+-\phi)\big],
\end{aligned}
\label{eq:coordcost}
\end{equation}
where $\Delta_\phi=e^{\jmathc\phi}-e^{\jmathc\widehat\theta_{l,n}[r]}$, $\widehat\theta^\pm=\widehat\theta_{l,n}[r\pm1]$, and the switching terms are omitted at $r=0$ and $r=R-1$. For $\lambda_s=0$ the slots decouple and are updated in parallel; for $\lambda_s>0$ the even and the odd slots are updated in alternate half-sweeps, so that every update sees fixed neighbors.

\begin{algorithm}[t]
\caption{Phase-only ST-SIM trajectory synthesis}
\label{alg:synthesis}
\begin{algorithmic}[1]
\REQUIRE $\{\bm g_r^\star\}_{r=0}^{R-1}$, $\{\bm W_l\}$, $\bm f$, $B_\phi$, $0<c<1$, $0<\rho<1$, $I_{\max}$, $n_{\rm ref}$
\STATE Initialize $\Theta^{(0)}$ uniformly at random on $[0,2\pi)^{LN\times R}$
\FOR{$i=0,1,\ldots,I_{\max}-1$}
\STATE Forward sweeps: $\{\bm u_l[r]\}$, $\bm g_r$, and $\alpha^\star$ by \eqref{eq:alphastar}
\STATE Backward sweeps: $\bm\Gamma_i\triangleq\nabla\tilde{\mathcal J}(\Theta^{(i)})$ by \eqref{eq:objectivegradient}
\STATE Trial step $\mu\gets\langle\bm S_i,\bm S_i\rangle/\langle\bm S_i,\bm\Gamma_i-\bm\Gamma_{i-1}\rangle$, $\bm S_i=\Wwrap(\Theta^{(i)}-\Theta^{(i-1)})$, clipped to $[\mu_{\min},\mu_{\max}]$
\STATE \textbf{while} $\tilde{\mathcal J}(\Wwrap(\Theta^{(i)}-\mu\bm\Gamma_i))>\tilde{\mathcal J}(\Theta^{(i)})-c\mu\norm{\bm\Gamma_i}_F^2$ \textbf{do} $\mu\gets\rho\mu$
\STATE $\mu_i\gets\mu$;\ $\Theta^{(i+1)}\gets\Wwrap(\Theta^{(i)}-\mu_i\bm\Gamma_i)$
\ENDFOR
\STATE Quantize: $\widehat\Theta\gets\arg\min_{\Theta\in\Qphi^{LN\times R}}\norm{\Wwrap(\Theta-\Theta^{(I_{\max})})}_F$
\FOR{$n_{\rm ref}$ rounds, every slot $r$, layer $l$, meta-atom $n$}
\STATE $\widehat\theta_{l,n}[r]\gets\arg\min_{\phi\in\Qphi}\mathcal J_{l,n,r}(\phi)$ in \eqref{eq:coordcost} \hfill (even $r$, then odd $r$)
\ENDFOR
\ENSURE $\widehat\Theta$, $\alpha^\star$
\end{algorithmic}
\end{algorithm}

\begin{proposition}[Stationarity of Armijo descent]\label{prop:armijo}
Let $\tilde{\mathcal J}=\mathcal J(\cdot,\alpha^\star(\cdot))$ with $\lambda_s\ge0$ and the smooth penalty \eqref{eq:switchreg}, which is continuously differentiable with Lipschitz gradient on the compact phase torus (the data term is a smooth function of unit-modulus entries and $\sum_r\norm{\bm g_r}_2^2>0$ since $\bm f\ne\bm 0$ and the $\bm W_l$ are nonsingular), and let $\{\Theta^{(i)}\}$ be generated by Step~1 of Algorithm~\ref{alg:synthesis} with trial steps in $[\mu_{\min},\mu_{\max}]$, $0<\mu_{\min}\le\mu_{\max}<\infty$. Then $\tilde{\mathcal J}(\Theta^{(i+1)})\le\tilde{\mathcal J}(\Theta^{(i)})$, $\sum_i\mu_i\norm{\nabla\tilde{\mathcal J}(\Theta^{(i)})}_F^2<\infty$, and every accumulation point is first-order stationary; in particular $\lim_i\norm{\nabla\tilde{\mathcal J}(\Theta^{(i)})}_F=0$ whenever the accepted step sizes are bounded away from zero along nonstationary subsequences. Every update of Step~3 is nonincreasing in $\mathcal J(\cdot,\alpha^\star)$ for any $\lambda_s\ge0$.
\end{proposition}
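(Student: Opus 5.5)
The plan is to treat $\tilde{\mathcal J}$ as a smooth function on the compact torus $\mathbb T=(\R/2\pi\Z)^{LN\times R}$ and apply the standard Armijo descent argument. The wrapping $\Wwrap$ only changes coordinates, so the iteration is ordinary gradient descent on $\mathbb T$.

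\emph{Regularity of $\tilde{\mathcal J}$.} First I would show that $\tilde{\mathcal J}$ is $C^1$ with a Lipschitz gradient. Every $\bm g_r$ is a polynomial in the entries $e^{\jmathc\theta_i[r]}$, so it is $C^\infty$ and $2\pi$-periodic. The denominator $\sum_rw_r\norm{\bm g_r}_2^2$ of \eqref{eq:alphastar} is strictly positive: each $\bm\Phi_l$ is unitary and each $\bm W_l$ is nonsingular, so $\bm G(\bm\theta[r])$ is nonsingular and $\bm g_r\ne\bm 0$ because $\bm f\ne\bm 0$. Hence $\alpha^\star(\Theta)$ is smooth, and so is $\tilde{\mathcal J}$, since the penalty \eqref{eq:switchreg} is a trigonometric polynomial. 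A smooth function on a compact manifold has a bounded Hessian, so $\nabla\tilde{\mathcal J}$ is Lipschitz with some constant $L_J$. Lemma~\ref{lem:grad} states that $\nabla\tilde{\mathcal J}=\nabla_\Theta\mathcal J|_{\alpha^\star}$, so $\bm\Gamma_i$ is the true gradient.

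\emph{Descent and summability.} By the descent lemma, $\tilde{\mathcal J}(\Theta-\mu\bm\Gamma)\le\tilde{\mathcal J}(\Theta)-\mu(1-L_J\mu/2)\norm{\bm\Gamma}_F^2$, and this holds after wrapping because $\tilde{\mathcal J}$ is periodic. The Armijo test of line~6 is therefore satisfied whenever $\mu\le2(1-c)/L_J$. The backtracking loop terminates, and the accepted step obeys $\mu_i\ge\underline\mu\triangleq\min\{\mu_{\min},2\rho(1-c)/L_J\}>0$. The Armijo inequality itself gives monotonicity. Telescoping it gives $c\sum_i\mu_i\norm{\bm\Gamma_i}_F^2\le\tilde{\mathcal J}(\Theta^{(0)})-\inf\tilde{\mathcal J}<\infty$, where the infimum is finite because $\tilde{\mathcal J}\ge0$.

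\emph{Stationarity.} Because $\mu_i\ge\underline\mu$, summability forces $\norm{\bm\Gamma_i}_F\to0$. Take an accumulation point, which exists by compactness of $\mathbb T$. Continuity of the gradient along the converging subsequence then shows that its gradient is zero. The clause in the statement about steps bounded away from zero is covered by this uniform lower bound on $\mu_i$. The main obstacle is justifying this uniform lower bound: one needs a global Lipschitz constant for the gradient of the ratio-dependent $\alpha^\star$. I would obtain it from compactness together with strict positivity of the denominator, which is a continuous positive function on a compact set and therefore has a positive minimum.

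\emph{Step~3.} With $\alpha^\star$ and all other phases fixed, \eqref{eq:coordcost} contains exactly the terms of $\mathcal J(\cdot,\alpha^\star)$ that depend on $\theta_{l,n}[r]$. Indeed, $\bm g_r=\bm V_l\bm\Phi_l\bm u_l$, and changing entry $n$ adds $\Delta_\phi[\bm u_l]_n\bm V_l\bm e_n$, because $\bm u_l$ and $\bm V_l$ do not depend on $\theta_{l,n}$. The remaining terms are constant in $\phi$. The current value $\widehat\theta_{l,n}[r]$ belongs to $\Qphi$, so the minimizer over $\Qphi$ cannot increase $\mathcal J$. In the even/odd half-sweeps, the coupled slots $r\pm1$ stay fixed, so each coordinate cost is evaluated exactly, and this holds for every $\lambda_s\ge0$.
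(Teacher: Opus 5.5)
Your proposal is correct and follows the paper's proof essentially step for step: the descent lemma gives the accepted-step lower bound $\min\{\mu_{\min},2\rho(1-c)/L_\nabla\}$, and telescoping the Armijo inequality gives summability. Compactness of the torus then gives stationary accumulation points, and Step~3 is exact coordinate descent over $\Qphi$ in which the current value is itself a candidate.

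Your direct deduction of $\norm{\bm\Gamma_i}_F\to0$ from the uniform step bound is slightly cleaner than the paper's contradiction argument, and it shows the conditional clause is automatic. The only point the paper adds is that recomputing $\alpha^\star$ after Step~3 can only lower $\mathcal J$ further, because $\alpha^\star$ minimizes a convex quadratic in $\alpha$.
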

\begin{IEEEproof}
Lipschitz continuity implies the descent lemma on any local chart of the torus, so backtracking from any trial step in $[\mu_{\min},\mu_{\max}]$ accepts, after finitely many reductions, a step bounded below by $\min\{\mu_{\min},2\rho(1-c)/L_{\nabla}\}$, with $L_\nabla$ the Lipschitz constant, and $\tilde{\mathcal J}(\Theta^{(i)})-\tilde{\mathcal J}(\Theta^{(i+1)})\ge c\mu_i\norm{\nabla\tilde{\mathcal J}(\Theta^{(i)})}_F^2$. The objective is bounded below, so the series converges; compactness yields accumulation points, and a nonstationary one would contradict the finiteness of the series together with the Armijo lower step bound in its neighborhood \cite{Bertsekas1999}. In Step~3, \eqref{eq:coordcost} contains every term of $\mathcal J(\cdot,\alpha^\star)$ that depends on the updated phase, with fixed neighbors; each update is thus exact coordinate descent on a finite set, and recomputing $\alpha^\star$ can only decrease $\mathcal J$ further.
\end{IEEEproof}
The trial step is well defined for every curvature estimate: when $\langle\bm S_i,\bm\Gamma_i-\bm\Gamma_{i-1}\rangle\le0$, the Barzilai--Borwein quotient is replaced by $\mu_{\max}$, which the backtracking then reduces until sufficient decrease holds, so the guarantee of the proposition is unaffected. In the baseline geometry of Section~\ref{sec:results}, $\norm{\nabla\tilde{\mathcal J}}_F$ after $300$ iterations is $2.5\times10^{-5}$, $7.5\times10^{-6}$, $2.6\times10^{-6}$, and $9.2\times10^{-6}$ for $L=2$, $3$, $4$, and $6$, and the single layer meets the tolerance $10^{-6}$ after $142$ iterations: its stall at $-5.9$~dB is a stationary point. The proposition guarantees monotone decrease and convergence to first-order stationary points, without global optimality, for the continuous relaxation of $\mathcal P_2$; the discrete alphabet is handled by Steps~2--3. Each iteration and backtracking trial cost $\mathcal O(RLN^2)$.

\subsection{Finite-Resolution Phase Quantization}
Let $\widehat{\bm\Phi}_l$ denote the nearest-neighbor $B_\phi$-bit quantization of $\bm\Phi_l$ and $\widehat{\bm G}$ the corresponding stack matrix, and define $\delta_{B_\phi}\triangleq2\sin(\pi/2^{B_\phi+1})$, so that $\norm{\widehat{\bm\Phi}_l-\bm\Phi_l}_2\le\delta_{B_\phi}$.
\begin{proposition}[Multilayer quantization scaling law]\label{thm:quant}
With $\kappa_l=\norm{\bm W_l}_2$,
\begin{equation}
\norm{(\widehat{\bm G}-\bm G)\bm f}_2\le L\,\delta_{B_\phi}\Big(\prod_{l=2}^L\kappa_l\Big)\norm{\bm f}_2\le L\,\delta_{B_\phi}\norm{\bm f}_2,
\label{eq:gquantbound}
\end{equation}
where the last inequality holds for a passive stack.
\end{proposition}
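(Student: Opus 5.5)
The plan is a telescoping argument on the product \eqref{eq:G}, in which each layer is replaced by its quantized version one at a time. For $k=0,\ldots,L$ define the hybrid stack $\bm G_k$ whose first $k$ phase matrices are quantized and whose remaining ones are exact, $\bm G_k=\bm\Phi_L\bm W_L\cdots\bm\Phi_{k+1}\bm W_{k+1}\widehat{\bm\Phi}_k\bm W_k\cdots\widehat{\bm\Phi}_1$. Then $\bm G_0=\bm G$ and $\bm G_L=\widehat{\bm G}$, so that
\begin{equation*}
\widehat{\bm G}-\bm G=\sum_{k=1}^{L}(\bm G_k-\bm G_{k-1}).
\end{equation*}
Each difference $\bm G_k-\bm G_{k-1}$ factors as $\bm V_k(\widehat{\bm\Phi}_k-\bm\Phi_k)\widehat{\bm U}_k$. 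Here $\bm V_k$ is the exact downstream transfer from \eqref{eq:fwdbwd}, and $\widehat{\bm U}_k=\bm W_k\widehat{\bm\Phi}_{k-1}\cdots\widehat{\bm\Phi}_1$ is the quantized upstream transfer, taken to be $\bm I_N$ for $k=1$.

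Next I bound each term by submultiplicativity. The nearest-neighbor quantized entries $e^{\jmathc\widehat\theta}$ are still unit modulus, so every $\widehat{\bm\Phi}_l$ is unitary, just as in the proof of Lemma~\ref{lem:energy}; the same holds for every exact $\bm\Phi_l$. Hence $\norm{\bm V_k}_2\le\prod_{l=k+1}^L\kappa_l$ and $\norm{\widehat{\bm U}_k}_2\le\prod_{l=2}^{k}\kappa_l$. The diagonal error matrix satisfies $\norm{\widehat{\bm\Phi}_k-\bm\Phi_k}_2=\max_n\abs{e^{\jmathc\widehat\theta_{k,n}}-e^{\jmathc\theta_{k,n}}}$. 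With $2^{B_\phi}$ uniformly spaced states, the phase error is at most $\pi/2^{B_\phi}$, and the chord length $2\abs{\sin(\Delta/2)}$ is therefore at most $\delta_{B_\phi}$, which gives the stated spectral bound. Each term is thus bounded by $\delta_{B_\phi}\prod_{l=2}^L\kappa_l\norm{\bm f}_2$, because the upstream and downstream $\kappa$ products together cover every $l=2,\ldots,L$ exactly once. Summing the $L$ terms with the triangle inequality gives the first inequality in \eqref{eq:gquantbound}. The second inequality follows from $\kappa_l\le1$ in \eqref{eq:contractive}.

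The main point needing care is the bookkeeping of the hybrid products. The factors do not commute, so the telescoping must keep the order of the factors, and one must check that no $\kappa_l$ is counted twice or omitted in the product of the upstream and downstream norms. The chord-length bound is elementary. I also note that the hypothesis sets $\kappa_l=\norm{\bm W_l}_2$ exactly, so no slack enters beyond submultiplicativity.
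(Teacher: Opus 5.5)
Your proposal is correct and follows the paper's proof: a telescoping sum that replaces one phase matrix at a time, bounds each of the $L$ terms by $\delta_{B_\phi}\prod_{l=2}^L\kappa_l\norm{\bm f}_2$ using unit-modulus phase matrices and submultiplicativity, and finishes with the triangle inequality. Your explicit hybrid products and your chord-length derivation of $\norm{\widehat{\bm\Phi}_l-\bm\Phi_l}_2\le\delta_{B_\phi}$ spell out steps that the paper states in one line.
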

\begin{IEEEproof}
Write $\widehat{\bm G}-\bm G$ as a telescoping sum of $L$ terms in which one phase matrix at a time is replaced by its quantized version. Each term contains exactly one factor $\widehat{\bm\Phi}_l-\bm\Phi_l$, $L-1$ unit-norm phase matrices, and all propagation matrices once, hence has norm at most $\delta_{B_\phi}\prod_l\kappa_l$.
\end{IEEEproof}
The bound ignores cancellation among layers and is loose for coarse quantization, where it exceeds the norm of the field itself ($L\delta_{B_\phi}>1$ for $B_\phi\le4$ with $L=4$); it is to be read as a scaling law: for fixed depth the field error decays as $\mathcal O(2^{-B_\phi})$, i.e., the synthesis error $\mathcal J$, which is a squared norm, as $\mathcal O(2^{-2B_\phi})$ or $6$~dB per bit, and for fixed resolution its worst-case accumulation is at most linear in $L$ (Fig.~\ref{fig:synthesis_error}).

\subsection{Control Rate, Switching Harmonics, and Control Bandwidth}
If the desired field contains temporal frequencies in $[-\nu_{\max},\nu_{\max}]$ and the control is interpolated ideally, sampling theory requires $f_u>2\nu_{\max}$. Practical digital-coding metasurfaces, however, hold each state constant over $T_u$, which generates switching harmonics.
\begin{proposition}[Control-rate law for held control]\label{prop:zoh}
Let the aperture field be held constant over each control slot, $\bm g_h(u)=\bm g^\star(u_r)$ for $u\in[u_r,u_r+T_u)$, and let the ideal field toward terminal $k$ be the tone $\beta_ke^{-\jmathc2\pi f_{D,k}u}$. Then the held field toward terminal $k$ contains the tone at $-f_{D,k}$ with amplitude $\beta_k\sinc(f_{D,k}/f_u)$ and images at $-f_{D,k}-pf_u$, $p\in\Z\setminus\{0\}$, with amplitudes $\beta_k\sinc\big((f_{D,k}+pf_u)/f_u\big)$. The energy fraction diverted into the images is
\begin{equation}
1-\sinc^2\!\big(f_{D,k}/f_u\big)=\frac{\pi^2}{3}\Big(\frac{f_{D,k}}{f_u}\Big)^2+\mathcal O\big((f_{D,k}/f_u)^4\big),
\label{eq:zohlaw}
\end{equation}
so that, when the images fall inside the signal band, the self-interference ratio of terminal $k$ is capped at
$\mathrm{SIR}_k^{\rm ZOH}=\sinc^2(f_{D,k}/f_u)/[1-\sinc^2(f_{D,k}/f_u)]\approx3f_u^2/(\pi^2f_{D,k}^2)$.
\end{proposition}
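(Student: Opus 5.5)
The plan is to reduce the claim to a scalar Fourier computation on the component of the held field that reaches terminal $k$. By Theorem~\ref{thm:perfect}, the ideal field seen by terminal $k$ is $s(u)=\bm a_k^H\bm g^\star(u)=\beta_ke^{-\jmathc2\pi f_{D,k}u}$, because the other modes are nulled by $\bm A^H\bm\Psi^\star=\bm D_\beta$. Holding is applied entrywise and is linear, so $\bm a_k^H\bm g_h(u)$ is the zero-order-hold version $s_h(u)=\sum_r s(u_r)\,\Pi\big((u-u_r)/T_u\big)$, where $\Pi$ is the indicator of $[0,1)$. Inter-mode leakage therefore stays zero, and only this scalar signal needs to be analyzed.

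Next I would write $s_h$ as a sampled-and-convolved signal. Take $u_0=0$ without loss of generality, since a shift only multiplies every tone by the common phase $e^{-\jmathc2\pi f_{D,k}u_0}$. Then $s_h=(s\cdot\sum_r\delta(\cdot-rT_u))*\Pi(\cdot/T_u)$. A cleaner route avoids distributions. The product $s_h(u)e^{\jmathc2\pi f_{D,k}u}$ equals $\beta_ke^{\jmathc2\pi f_{D,k}(u-\lfloor u/T_u\rfloor T_u)}$, which is $T_u$-periodic. Its Fourier series on $[0,T_u)$ has coefficients
\begin{equation*}
c_p=\frac{\beta_k}{T_u}\int_0^{T_u}e^{\jmathc2\pi f_{D,k}v}e^{-\jmathc2\pi pf_uv}dv .
\end{equation*}
Evaluating the integral gives $\beta_ke^{\jmathc\pi(f_{D,k}/f_u-p)}\sinc(f_{D,k}/f_u-p)$. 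The sign of $p$ needs care: with the paper's indexing, the image at $-f_{D,k}-pf_u$ corresponds to coefficient index $-p$, which gives amplitude magnitude $|\beta_k\sinc((f_{D,k}+pf_u)/f_u)|$. The unimodular phase factor is a half-slot delay and is absorbed, consistent with the statement's amplitude convention. Multiplying back by $e^{-\jmathc2\pi f_{D,k}u}$ places these tones at $-f_{D,k}-pf_u$, and $p=0$ gives the desired tone with amplitude $\beta_k\sinc(f_{D,k}/f_u)$.

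For the energy fraction, I would use Parseval on the periodic factor: $\sum_p|c_p|^2=T_u^{-1}\int_0^{T_u}|\beta_k|^2=|\beta_k|^2$, because $|s_h|=|\beta_k|$. This is the same argument as in Lemma~\ref{lem:energy}. The fraction in the images is therefore $1-\sinc^2(f_{D,k}/f_u)$. As a side identity, this gives $\sum_p\sinc^2(x+p)=1$.

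The expansion in \eqref{eq:zohlaw} follows from $\sinc x=1-\pi^2x^2/6+\mathcal O(x^4)$, as in Proposition~\ref{prop:ici}. Squaring gives $1-\pi^2x^2/3+\mathcal O(x^4)$. The SIR cap is then the ratio of the desired-tone power to the image power when all images fall in band, and its approximation $3f_u^2/(\pi^2f_{D,k}^2)$ comes from the leading term.

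The main obstacle I expect is bookkeeping rather than substance. The points to get right are the index and sign convention of the images relative to the paper's $-pf_u$, the half-slot phase factor, and the justification that linearity of holding keeps other modes' leakage at exactly zero at terminal $k$. The last point holds because the held field at each slot is exactly $\bm g^\star(u_r)$, which satisfies decoupling pointwise.
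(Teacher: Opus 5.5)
Your proof is correct and uses the same three facts as the paper's proof:
\begin{itemize}
\item the sinc weighting obtained by integrating the tone over one slot;
\item Parseval combined with $\abs{\bm a_k^H\bm g_h(u)}=\abs{\beta_k}$;
\item the expansion of $\sinc^2$.
\end{itemize}

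What differs is how the Fourier analysis is carried out. The paper writes the held signal as the impulse-sampled tone filtered by the rectangular hold pulse. The images then appear directly as the aliases $-f_{D,k}-pf_u$, weighted by the hold transfer function $T_ue^{-\jmathc\pi fT_u}\sinc(fT_u)$. This sample-then-filter view is linear in the held field and applies unchanged to any multi-tone $\bm g^\star$. It also shows why a band-limiting receiver filter removes the distant images, which is what the numerical results observe at high control rates.

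Your demodulation by $e^{\jmathc2\pi f_{D,k}u}$ instead produces a bounded, genuinely $T_u$-periodic function. The coefficients then come from an ordinary Fourier series, and Parseval holds in its standard form on one period. This tightens the one informal step in the paper. The held tone has infinite energy and is periodic only when $f_{D,k}/f_u$ is rational, so the paper's appeal to Parseval must be read as a mean-power identity for almost-periodic signals. Your route gives $\sum_p\sinc^2(f_{D,k}/f_u+p)=1$ directly. You also make explicit what the paper leaves implicit: the hold commutes with $\bm a_k^H$, so exact decoupling survives slot by slot.

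One small correction to your bookkeeping concerns the reduction to $u_0=0$. In the original time axis, a change of $u_0$ multiplies the $p$th tone by $e^{-\jmathc2\pi pf_uu_0}$, not by the common phase $e^{-\jmathc2\pi f_{D,k}u_0}$; the fundamental is in fact left unchanged. This factor is unimodular, so the magnitudes are unaffected and your reduction to $u_0=0$ still stands.
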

\begin{IEEEproof}
The held signal is the sampled tone convolved with a rectangular pulse of length $T_u$, whose spectrum is $T_ue^{-\jmathc\pi fT_u}\sinc(fT_u)$, evaluated at the aliases $-f_{D,k}-pf_u$ of the sampled tone; the held tone has unit modulus, so by Parseval the image energy equals one minus the fundamental energy, and the expansion of $\sinc^2$ gives \eqref{eq:zohlaw}.
\end{IEEEproof}
Equation~\eqref{eq:zohlaw} has the same rectangular-window origin as Proposition~\ref{prop:ici} and yields an explicit hardware requirement: an in-band SIR of 30~dB for a synthesized tone of $160$~kHz requires $f_u\gtrsim9$~MHz, and when $f_u$ exceeds the signal bandwidth the images fall out of band, where they are an emission rather than an interference problem (Fig.~\ref{fig:control}).

The binding hardware constraint is not the MHz-rate switching of single elements, which is established \cite{Zhang2018,Zhang2021,Sun2025}, but the \emph{aggregate} control bandwidth: streaming every state requires $LNB_\phi f_u$ bit/s, i.e., $9.2$~Gbit/s for $LN=400$, $B_\phi=3$, and $f_u=7.68$~MHz. If the tones are rounded to a grid $\delta_f$, the target is periodic with period $1/\delta_f$, and each meta-atom can replay a local table of $f_u/\delta_f$ states, provided that one full period is synthesized offline, since a nonconvex synthesis does not return a periodic trajectory by itself; for $\delta_f=100$~Hz and $f_u=7.68$~MHz, a period comprises $76\,800$ slots ($3.1\times10^7$ phases for $LN=400$), which is tractable because \eqref{eq:P2} couples the slots only through the switching penalty and the gain $\alpha$: synthesized in parallel blocks of $R=274$ slots with fixed $\alpha$ and overlapping boundaries ($4$~s per block on one core), the period costs about $20$~core-minutes per pass geometry. The tables are refreshed when a rounded tone has drifted by $\delta_f/2$, which reduces the streaming rate by $\delta_f^2/(2\max_k\abs{\dot f_{D,k}-\dot\nu_c^\star})$, i.e., with the drift of at most $1.5$~kHz/s of Section~\ref{sec:results}, by $3.3\times$ for $\delta_f=100$~Hz and $30\times$ for $\delta_f=300$~Hz (residual below $0.005\Delta f$).

\section{Numerical Results}\label{sec:results}
In this section, we extensively investigate the performance of the proposed ST-SIM architecture.
\subsection{Setup, Benchmarks, and Metrics}
Table~\ref{tab:sim} lists the baseline parameters. At $t=0$, the four baseline terminals at ground offsets $(-180,60)$, $(-40,-130)$, $(90,140)$, and $(200,-30)$~km (along-, cross-track) see Doppler shifts of $-143.7$, $-32.7$, $72.7$, and $158.6$~kHz. Geometry follows Section~\ref{sec:model} with static terminals in the free-space LoS channel of \eqref{eq:exactrx}, with steering vectors and Doppler shifts evaluated at the actual directions and slant ranges; the path gain $\ell_k$ is absorbed into $\mathrm{SNR}_0$ and taken equal for all terminals, since the slant range grows from $600$~km at the beam center to $677$~km at $300$~km from it, i.e., by at most $1.1$~dB of free-space loss; $\bm W_l$ and $\bm f$ follow the Rayleigh--Sommerfeld model of \cite{AnJSAC2023,Lin2018} with $\norm{\bm W_l}_2=1$. The payload broadcasts one common waveform, and the control trajectory is synthesized over the OFDM symbol including its cyclic prefix ($72$ samples) and held over $T_u=4$ samples, so switching harmonics are included; the ST-SIM synthesizes the differential tones $f_{D,k}-\nu_c^\star$. Algorithm~\ref{alg:synthesis} uses $[\mu_{\min},\mu_{\max}]=[10^{-2},10^{5}]$, $\lambda_s=0$, a gradient tolerance of $10^{-6}$, $n_{\rm ref}=2$, and $I_{\max}= 300$. Figures show medians over $50$ random drops unless stated.

\begin{table}[t]
\centering
\caption{Baseline Simulation Parameters}
\label{tab:sim}
\scriptsize
\setlength{\tabcolsep}{3pt}
\renewcommand{\arraystretch}{0.95}
\begin{tabular}{l c l c}
\toprule
Parameter & Value & Parameter & Value\\
\midrule
Carrier $f_c$ & 20 GHz & Altitude $h$ & 600 km\\
Terminals $K$ & 4 & Footprint radius $\rho_f$ & 200 km\\
Layers $L$ & 4 & Meta-atoms/layer $N$ & $10\times10$\\
Element spacing & $\lambda/2$ & Layer / feed spacing & $\lambda$ / $2\lambda$\\
Subcarriers $M$ & 1024 & Spacing $\Delta f$ / CP & 30 kHz / 72 samples\\
Control rate $f_u$ & 7.68 MHz & Slots per symbol $R$ & 274\\
Phase bits $B_\phi$ & 1--3, $\infty$ & Armijo $(c,\rho)$ & $(10^{-4},0.5)$\\
Random drops & 50 & Feed SNR $\mathrm{SNR}_0$ & 25 dB\\
\bottomrule
\end{tabular}
\end{table}
\begin{figure*}[t]
\centering
\subfloat[Convergence of Algorithm~\ref{alg:synthesis}\label{fig:convergence}]{\includegraphics[width=0.33\textwidth]{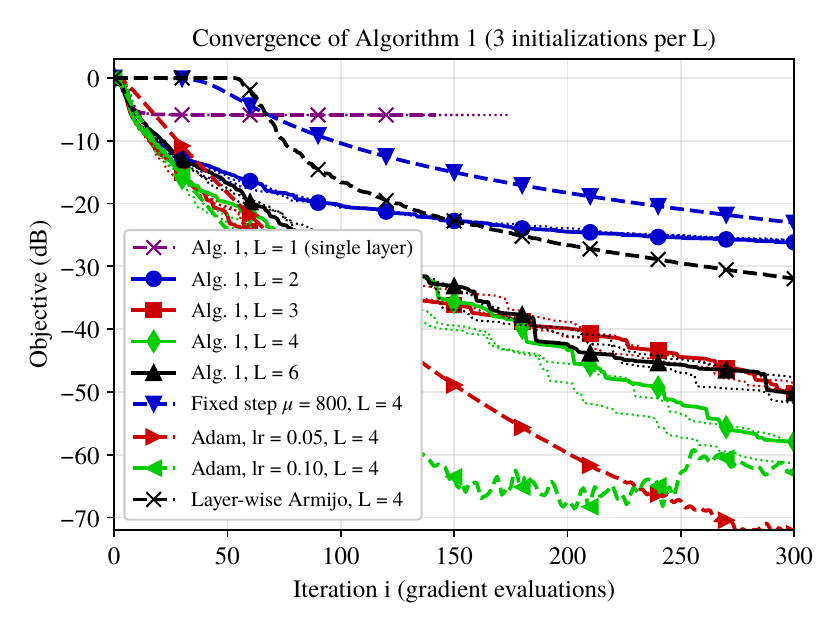}}\hfil
\subfloat[Synthesis error vs.\ phase resolution\label{fig:synthesis_error}]{\includegraphics[width=0.33\textwidth]{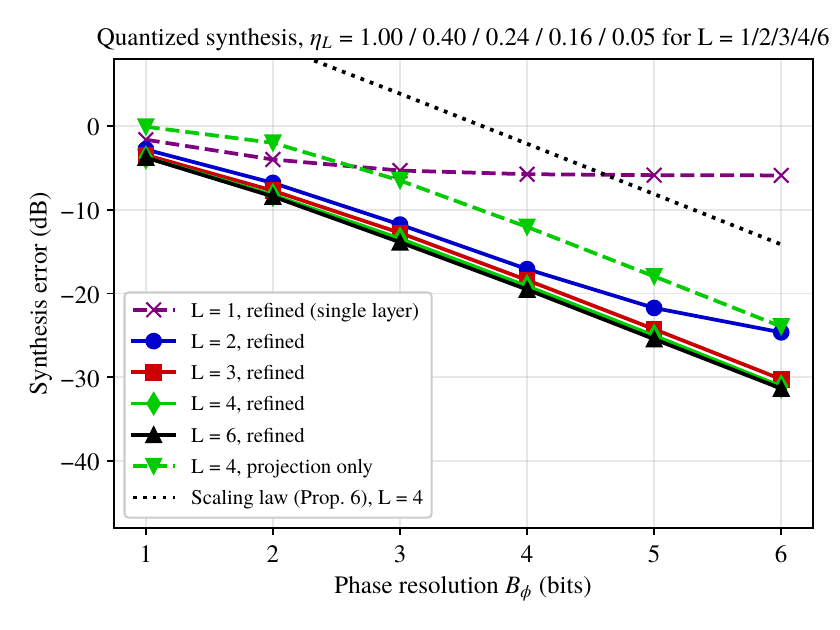}}\hfil
\subfloat[In-band SIR and out-of-band power vs.\ control rate\label{fig:control}]{\includegraphics[width=0.33\textwidth]{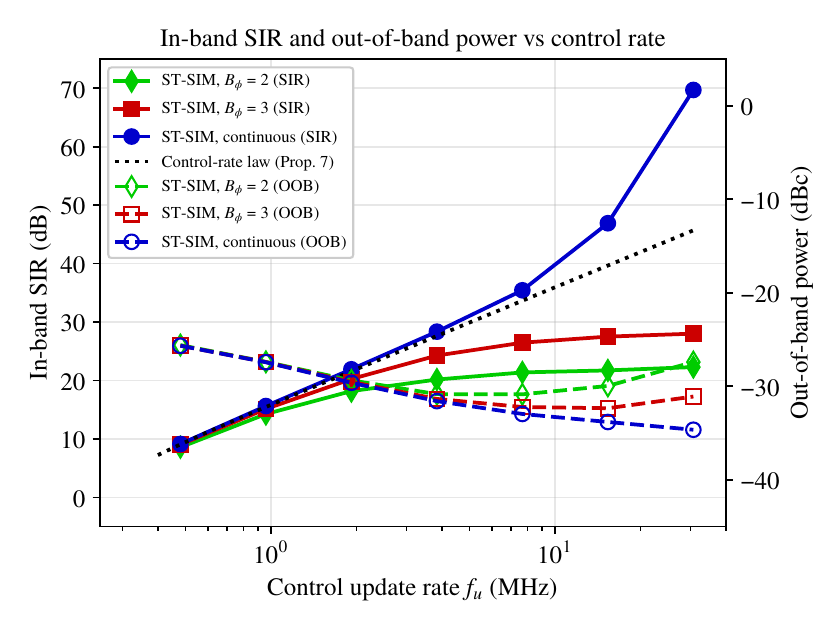}}
\caption{Realizability ($K=4$, $L=4$, $N=100$, $R=274$ unless stated). (a) Objective of Step~1 of Algorithm~\ref{alg:synthesis} for $L=1,2,3,4,6$ (three initializations each), fixed-step, Adam, and layer-wise descent for $L=4$. (b) Synthesis error with quantization-aware refinement ($L=1,2,3,4,6$) and projection only ($L=4$), with the scaling law of Proposition~\ref{thm:quant}. (c) In-band SIR (left) and out-of-band power (right) versus $f_u$, with the law of Proposition~\ref{prop:zoh}.}
\label{fig:hw}
\end{figure*}
Three benchmarks are considered. \emph{Benchmark~1} is a space-only SIM of the same geometry \cite{An2025,Lin2024} synthesizing the max--min multicast beam $\bm b^\star=\arg\max_{\norm{\bm b}_2=1}\min_k\abs{\bm a_k^H\bm b}^2$, computed by multi-start weighted power iteration, which is not certified to be globally optimal but is never worse than the sum $\bm\Psi^\star\one_K$ of the decoupled modes, with its feed shifted by $\nu_c^\star$; it represents every single-translation front end (Proposition~\ref{prop:common}). \emph{Benchmark~2}, ideal digital per-terminal precompensation, applies one digital translation per terminal and a lossless zero-forcing precoder, i.e., radiates $\bm\Psi^\star\bm d_D(u)$ exactly; it requires $K$ RF chains and a lossless $N\times K$ multibeam network and upper-bounds every exact realization of Theorem~\ref{thm:perfect}. \emph{Benchmark~3} is the payload of Benchmark~1 with every terminal correcting its residual CFO, including the error of Moose's two-symbol estimator, $\mathrm{var}(\hat\epsilon_k)=1/(4\pi^2M\,\mathrm{SNR}_k)$ \cite{Moose1994}, and, where stated, two pilot symbols per 14-symbol slot. The no-compensation reference is the space-only SIM without frequency correction. Finally, a single-layer time-modulated aperture is also investigated, $\bm g(u)=\bm\Phi_1(u)\bm f$, with the same feed, phase resolution, control rate, and synthesis objective as the ST-SIM (the one-layer case of Algorithm~\ref{alg:synthesis}), isolating the contribution of stacking.

The performance metrics are obtained as follows. The aperture field is evaluated on a four-times oversampled grid, Doppler-shifted, band-limited to $\abs f\le M\Delta f/2$ by an ideal receiver filter, and sampled; for terminal $k$, the response $s_k[m]=e^{\jmathc2\pi f_{D,k}mT_s}\bm a_k^H\widehat{\bm g}[m]$ over the DFT window gives the ICI matrix $\bm F\diag(\bm s_k)\bm F^H$, whose diagonal equals $\overline{s_k}$; hence $P_{d,k}=\abs{\overline{s_k}}^2$, and $P_{i,k}=\overline{\abs{s_k}^2}-P_{d,k}$ is the average interference for i.i.d.\ data, which includes switching images, quantization noise, and inter-mode leakage. Since the receiver filter acts on the product of the aperture response and the OFDM waveform, this cyclic model was verified by a full waveform simulation (random CP-OFDM symbols through the time-varying aperture, Doppler shift, receiver filter, decimation, and FFT): over the drops of Fig.~\ref{fig:footprint}, the per-terminal SIR of the ST-SIM differs by $0.3$~dB in median ($0.7$~dB at the 95th percentile), the cyclic model being slightly pessimistic, and the worst-terminal SINR by less than $0.1$~dB; over the control rates of Fig.~\ref{fig:control}, the average SIR of the full waveform differs from the cyclic model by $0.1$ to $0.7$~dB with continuous and three-bit phases, except by $-2.3$~dB with continuous phases at $15.36$~MHz, where the SIR exceeds $44$~dB. The SINR is $\gamma_k=\mathrm{SNR}_0P_{d,k}/(\mathrm{SNR}_0P_{i,k}+N)$ with $\mathrm{SNR}_0=PN/\sigma^2$, i.e., the SNR of a lossless full-aperture beam including the array gain $N$ ($5$~dB per meta-atom at $25$~dB for $N=100$), and, since all terminals decode the same broadcast stream, the figures report the worst-terminal SINR $\min_k\gamma_k$ and the broadcast spectral efficiency $\min_k\log_2(1+\gamma_k)$. All schemes are compared at \emph{equal long-run radiated peak energy}, defined by $\sup_u\norm{\bm g^\star(u)}_2^2=P_{\rm pk}(\bm\Psi)=1$ for every ideal field $\bm g^\star(u)=\bm\Psi\bm d_D(u)$, with $P_{\rm pk}$ the long-run peak energy \eqref{eq:peak}, and by $\widehat{\bm g}_r=\min\{1,\max_{r'}\norm{\alpha^\star\bm g_{r'}}_2^{-1}\}\alpha^\star\bm g_r$ for every realized field, i.e., the gain $\alpha^\star$ of \eqref{eq:alphastar} fits the realized field to the ideal target and is reduced if the energy exceeds one within the simulated window, as required by passivity (Lemma~\ref{lem:energy}); the \emph{equal feed power} figures of merit additionally scale the stack-based schemes by their simulated transmission efficiency $\eta_L=\overline{\norm{\bm G\bm f}_2^2}$.

\subsection{Realizability: Convergence, Quantization, and Control Rate}\label{sec:res_synth}
In Fig.~\ref{fig:convergence}, the objective of Algorithm~\ref{alg:synthesis} decreases monotonically (Proposition~\ref{prop:armijo}). The single layer stalls at $-5.9$~dB for every initialization: a phase-only layer cannot change the amplitude of any element excitation, whereas the multi-tone target \eqref{eq:dD} requires element amplitudes that vary in time, so that inter-layer mixing is necessary rather than merely helpful. For $L\ge3$ it reaches $-50$ to $-61$~dB after $300$ iterations and the three initializations of each depth end within $4$~dB; $L=2$ stalls near $-26$~dB, lacking the degrees of freedom to shape four frequency-translated modes. Fixed-step and layer-wise descent reach only $-23$ and $-32$~dB; Adam descends faster ($-68$ and $-73$~dB at learning rates $0.1$ and $0.05$) but is not monotone ($83$ and $36$ ascending steps out of $300$) and must be tuned per problem; since the quantization floor of $-13$~dB lies more than $40$~dB above the stationary points of either method, the monotone, tuning-free descent of Algorithm~\ref{alg:synthesis} is preferable for an autonomous payload.

Phase resolution does not change this picture (Fig.~\ref{fig:synthesis_error}): the single layer stays between $-5.3$ and $-5.9$~dB from three bits on, whereas two layers reach $-26$~dB with continuous phases. In Fig.~\ref{fig:synthesis_error}, the refinement of Step~3 lowers the four-layer error after projection from $-2.0$, $-6.5$, and $-12.0$~dB to $-8.2$, $-13.4$, and $-19.1$~dB for $2$, $3$, and $4$~bits, which is worth more than one bit, and all refined curves follow the $6$-dB-per-bit slope of Proposition~\ref{thm:quant} ($5.2$ and $5.7$~dB per bit from two to four bits for $L=4$). Beyond $L=3$ depth barely helps ($-12.8$, $-13.4$, and $-13.9$~dB at three bits for $L=3$, $4$, and $6$), whereas $\eta_L$ drops from $0.40$ to $0.24$, $0.16$, and $0.05$, so the simulated diffraction loss limits the useful depth. In Fig.~\ref{fig:control}, the in-band SIR with continuous phases follows Proposition~\ref{prop:zoh} within $1$~dB up to $f_u=3.84$~MHz and exceeds it from $7.68$~MHz on ($35$~dB on average, $31$~dB for the terminal with the largest tone), where the receiver filter rejects part of the images, whereas two and three bits saturate at $22$ and $28$~dB, the quantization floor. The gain at high rates is not free: the out-of-band fraction decreases from $-26$~dBc at $0.48$~MHz to $-33$~dBc at $7.68$~MHz for continuous phases, but rises again for quantized phases ($-27.5$~dBc for two bits at $30.72$~MHz), whose quantization noise spreads over a bandwidth proportional to $f_u$; generated in the radiating aperture, beyond any payload filter, these images must be controlled through $f_u$, $B_\phi$, and the control waveform.

\begin{figure*}[t]
\centering
\subfloat[Radiated power vs.\ direction and frequency\label{fig:sf_map}]{\includegraphics[width=0.33\textwidth]{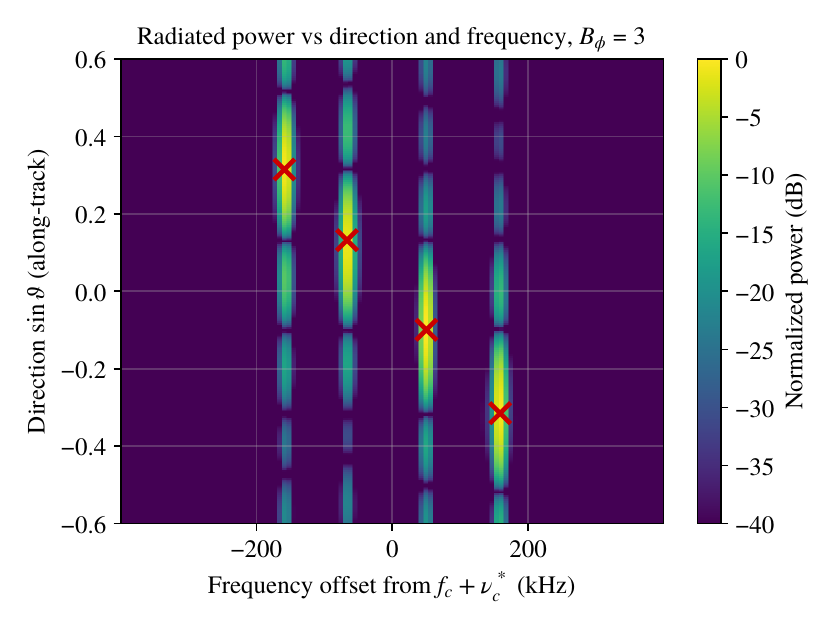}}\hfil
\subfloat[Worst-terminal SINR vs.\ footprint radius\label{fig:footprint}]{\includegraphics[width=0.33\textwidth]{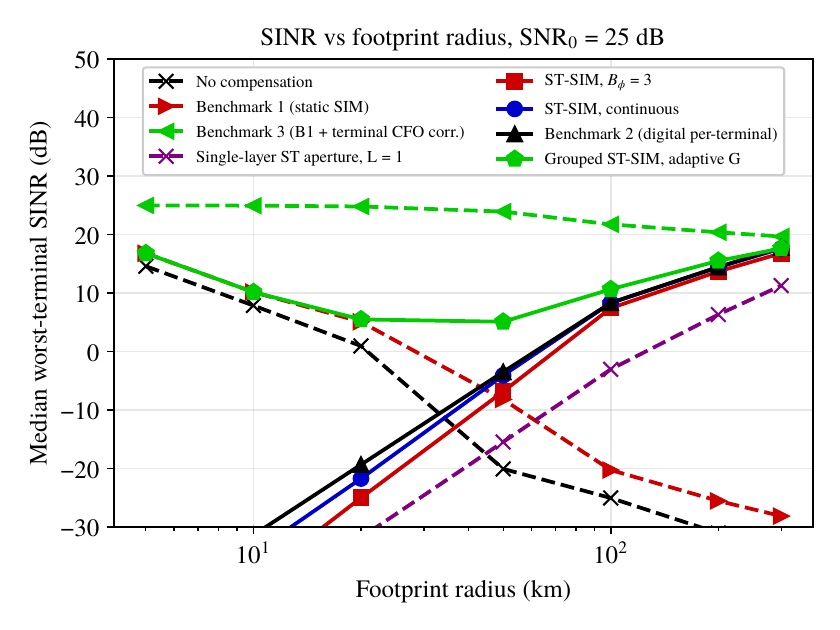}}\hfil
\subfloat[Worst-terminal SINR vs.\ aperture size\label{fig:aperture}]{\includegraphics[width=0.33\textwidth]{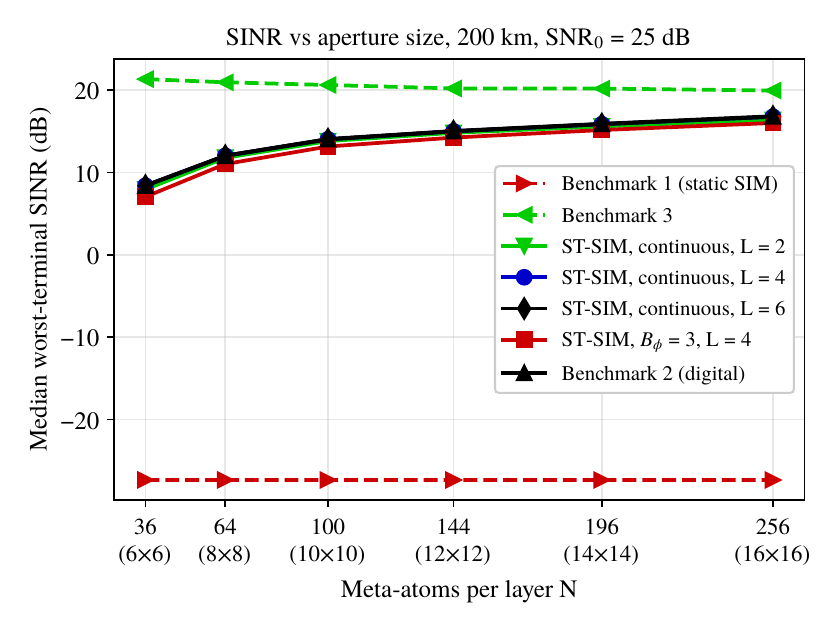}}\\[-1mm]
\subfloat[Broadcast spectral efficiency vs.\ $K$\label{fig:sumrate}]{\includegraphics[width=0.33\textwidth]{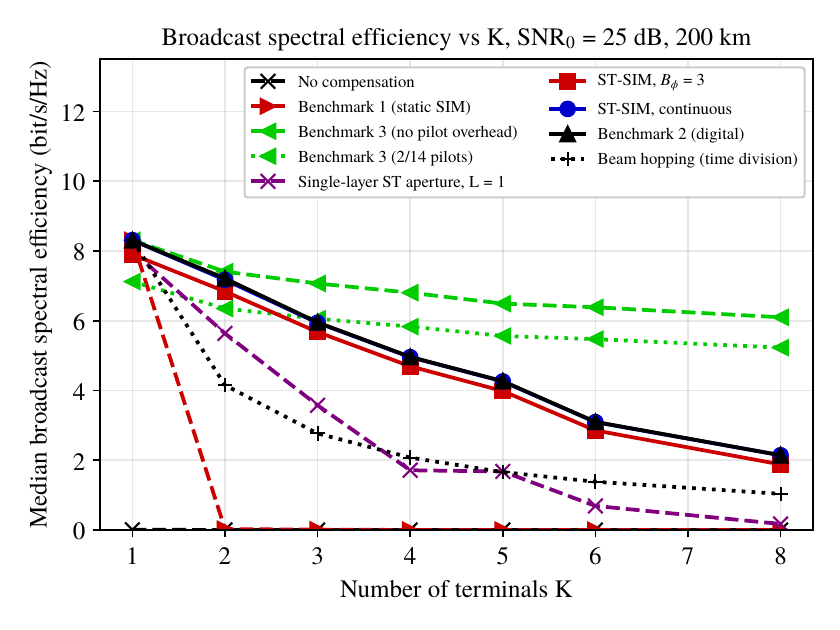}}\hfil
\subfloat[Grouped compensation, $G=1,\ldots,K$\label{fig:grouping}]{\includegraphics[width=0.33\textwidth]{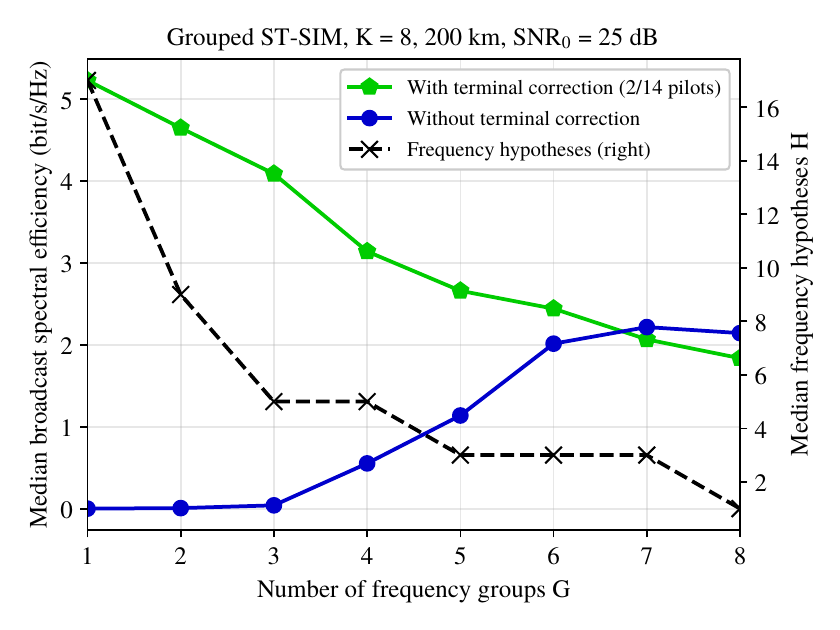}}\hfil
\subfloat[Robustness to position errors\label{fig:robust}]{\includegraphics[width=0.33\textwidth]{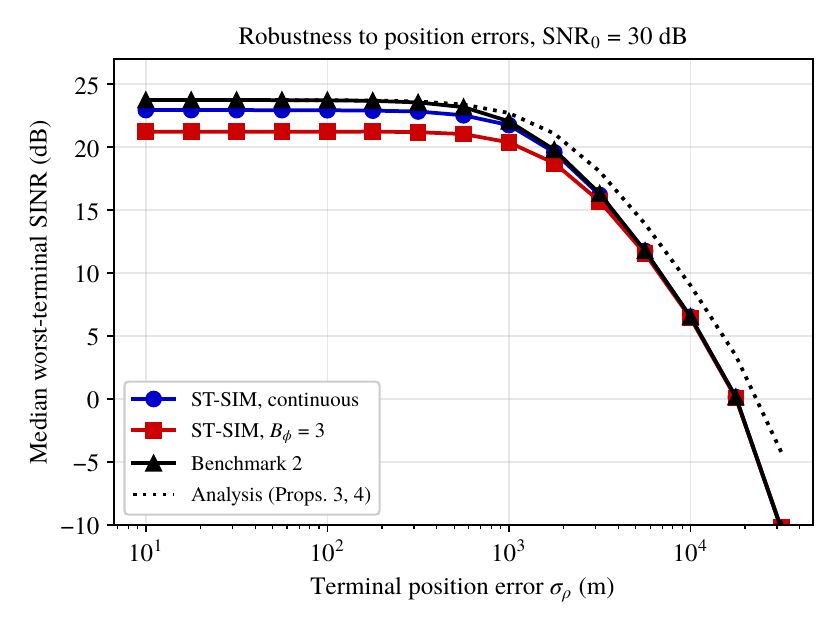}}
\caption{Link-level performance ($K=4$, $L=4$, $N=100$, $\mathrm{SNR}_0=25$~dB, $200$-km footprint, $50$ drops, equal long-run radiated peak energy, unless stated). (a) Radiated power of the three-bit stack versus along-track direction and frequency offset for four terminals on the ground track (crosses: terminal directions and tones). (b), (c) Median worst-terminal SINR versus footprint radius and versus aperture size ($L=2$, $4$, $6$; $30$ drops). (d) Median broadcast spectral efficiency versus $K$. (e) Grouped ST-SIM for $K=8$: spectral efficiency with and without terminal-side correction of the in-group residual, and frequency hypotheses. (f) Worst-terminal SINR at $\mathrm{SNR}_0=30$~dB versus terminal-position error, with the prediction of Propositions~\ref{prop:ici} and~\ref{prop:ephemeris}.}
\label{fig:link}
\end{figure*}

\subsection{What Time Modulation Adds to a SIM}\label{sec:res_link}
Fig.~\ref{fig:sf_map} shows the capability that distinguishes the ST-SIM from a static SIM: each terminal direction receives its own tone $-(f_{D,k}-\nu_c^\star)$ in a beam nulled toward the other terminals, i.e., $\bm H\approx\bm D_\beta$ in~\eqref{eq:coupling}; at each terminal, the other tones are at least $37$~dB down, so the $26$-dB in-band SIR of the three-bit stack is set by quantization and switching noise, not by inter-terminal leakage. Along the pass, Benchmark~1 attains an average SIR of only $-32$ to $+3$~dB, since the common-translation residual reaches $151$~kHz ($5.0\Delta f$) near zenith, whereas the ST-SIM keeps $35$--$49$~dB (continuous) and $21$--$27$~dB (three bits).

Fig.~\ref{fig:footprint} shows two regimes. In small footprints, the Doppler span is small, and the terminals are not spatially resolvable: Benchmark~1 performs well ($16.8$~dB at $5$~km), whereas the ST-SIM and Benchmark~2, which must separate the terminals to give them different frequencies, pay the penalty of Proposition~\ref{prop:solution} and Lemma~\ref{lem:peak} ($-42$~dB at $5$~km for the ideal field; the approximate realizations of the stack lie $2$--$6$~dB below it, trading leakage against energy). The single-layer aperture, whose interference floor lies near $13.5$~dB in the baseline geometry, stays $8.1$ and $6.3$~dB below the four-layer stack at $200$ and $300$~km, the gain of stacking at equal radiated peak energy, phase resolution, and control rate. The SINR-optimal translation $\nu_c^{\rm SINR}$ leaves Benchmark~1 unchanged up to $50$~km and raises it by $5.0$, $4.7$, and $3.2$~dB at $100$, $200$, and $300$~km, where it exploits the zeros of the ICI law, still more than $20$~dB below the ST-SIM. The curves cross at about $45$~km; beyond it, Benchmark~1 collapses, while the continuous-phase ST-SIM stays within $0.2$~dB of Benchmark~2 in median from $100$~km on ($14.4$~dB at $200$~km and $17.6$~dB at $300$~km, $0.7$--$0.8$~dB less with three bits). The long-run peak-to-average ratio of the ideal field, paid by every scheme, is $2.2$~dB in median at $200$~km and $4.6$~dB at $5$~km. Benchmark~3 attains $20$--$25$~dB at every radius, i.e., $6.0$~dB more than the ST-SIM at $200$~km, for two reasons inherent to per-terminal translation at the payload, digital or electromagnetic: its static field has no peak-to-average penalty and delivers a common gain to all terminals instead of separating them; it presumes, however, that every terminal has acquired and tracks a residual of several subcarrier spacings. At equal feed power, where every stack-based field is divided by the fitted gain $\abs{\alpha^\star}^2$ of \eqref{eq:alphastar} and Benchmark~2 is unchanged, the stack-based schemes lose about $8$~dB to the lossless Benchmark~2 ($\eta_L=0.16$), the simulated diffraction loss of the normalized cascade model: in the baseline geometry, Benchmark~2 attains $18.7$~dB, Benchmark~3 $10.4$~dB, and the ST-SIM $10.9$ and $11.3$~dB with continuous and three-bit phases, the latter reflecting a smaller fitted gain rather than a benefit of quantization (medians over the drops at $200$~km: $12.8$, $8.8$, and $14.5$~dB for Benchmark~3, the ST-SIM, and Benchmark~2). The lossless single layer ($\eta_1=1$) reaches $12.0$~dB at equal feed power in the baseline geometry, and the four-layer stack overtakes it from $\mathrm{SNR}_0\approx27$~dB on, where the interference floor of the single layer dominates. The optimal depth at equal feed power is smaller: in the baseline geometry, the equal-feed worst-terminal SINR at $\mathrm{SNR}_0=15/25/40$~dB is $6.6/12.0/13.4$~dB for $L=1$, $5.0/14.8/27.1$~dB for $L=2$, $2.8/12.8/26.0$~dB for $L=3$, $0.9/10.9/24.7$~dB for $L=4$, and $-4.1/5.8/20.4$~dB for $L=6$, whereas $L=2$ to $6$ coincide within $0.4$~dB at equal radiated energy; two layers, whose $-26$-dB synthesis error lies below the noise at these SNRs, overtake the single layer from $\mathrm{SNR}_0\approx20$~dB and are the best equal-feed choice in the normalized model, where each additional layer costs about $4$~dB of diffraction loss. The near-digital performance thus holds at equal radiated peak energy only, and the single-layer comparison is conditional on the SNR, on the geometry, and on the diffraction loss of the model. The baseline $\mathrm{SNR}_0=25$~dB is optimistic for current high-power amplifiers and Ka-band link budgets \cite{3GPP38821}; at $\mathrm{SNR}_0=15$~dB, the median worst-terminal SINR at $200$~km is $4.5$~dB for the ST-SIM and Benchmark~2 and $10.4$~dB for Benchmark~3, and the crossover with Benchmark~1 moves to about $59$~km, which calls for the larger apertures of Section~\ref{sec:res_scaling}.

\subsection{Aperture Size and Number of Terminals}\label{sec:res_scaling}
Fig.~\ref{fig:aperture} sweeps the aperture from $6\times6$ to $16\times16$ meta-atoms at $200$~km. The static Benchmark~1 is limited by the Doppler floor at $-27$~dB and does not improve with $N$, and Benchmark~3 stays at $20$--$21$~dB, whereas the ST-SIM gains $8.4$~dB from $N=36$ to $256$, because a larger aperture resolves the terminals with less modal energy (Proposition~\ref{prop:solution}); with $L=4$ the continuous-phase stack tracks Benchmark~2 within $0.1$~dB for every $N$, three bits cost $0.8$--$1.4$~dB, $L=2$ falls $0.2$--$0.5$~dB short, and $L=6$ brings no further gain. 

The same mechanism governs Fig.~\ref{fig:sumrate}: the broadcast spectral efficiency of the continuous-phase ST-SIM tracks Benchmark~2 within $0.05$~bit/s/Hz for every $K$ with a single RF chain, three bits cost at most $0.4$~bit/s/Hz, and Benchmark~1 collapses for $K\ge2$, and the single-layer aperture loses $3.2$~bit/s/Hz to the four-layer stack at $K=4$. Benchmark~3 without pilot overhead reaches $6.8$~bit/s/Hz at $K=4$ and exceeds the ST-SIM ($5.0$~bit/s/Hz) for $K\ge2$, since a static beam does not pay the separation cost that grows with $K$; with two pilot symbols per slot it drops to $5.8$~bit/s/Hz, and the ST-SIM leads unless CFO tracking costs less than $3\%$, $16\%$, and $27\%$ of the resources for $K=2$, $3$, and $4$. Time-division beam hopping serves each terminal only during its dwell: $\log_2(1+\mathrm{SNR}_0)/K=2.1$~bit/s/Hz at $K=4$.

\subsection{Grouped Compensation and Robustness}\label{sec:res_group}
Remark~\ref{rem:grouped} is evaluated with contiguous partitions of the Doppler-ordered terminals, each group receiving the midpoint of its Doppler range. In Fig.~\ref{fig:footprint}, \emph{adaptive grouping} selects for every drop the partition that maximizes the spectral efficiency predicted from the ideal grouped field, which the stack then synthesizes without terminal-side correction. It follows Benchmark~1 in small footprints ({$G=1$ in all drops up to $10$~km and in $96\%$ at $20$~km}) and the per-terminal ST-SIM in large ones ($G=K$ in {$88\%$} of the drops at $300$~km), and exceeds both in between: at $50$~km, where it mostly selects $G=2$ or $3$, its median SINR of {$5.1$}~dB lies {$9.2$}~dB above the per-terminal ST-SIM, and at $100$ and $200$~km the gain is still {$2.3$ and $1.1$}~dB, so the mode switch at the crossover becomes a smooth transition. Fig.~\ref{fig:grouping} sweeps $G$ for $K=8$ terminals in a $200$-km footprint with the partition that minimizes the largest in-group residual. Without terminal-side correction, the spectral efficiency grows from $0.0$~bit/s/Hz at $G=1$ (Benchmark~1) to $2.1$~bit/s/Hz at $G=K$, since in-group residuals of several subcarrier spacings are fatal; with correction of the in-group residual, it decreases from {$5.2$}~bit/s/Hz at $G=1$ (Benchmark~3), because separating groups costs energy, while the frequency search of the terminal shrinks from {$17$ to $9$} and $5$ hypotheses on a $\Delta f/2$ grid for $G=2$ and $3$ at {$4.7$ and $4.1$}~bit/s/Hz. For $K=4$, $G=2$ reduces the search from $13$ to $5$ hypotheses at a {$6\%$} loss in spectral efficiency. Grouping thus turns the choice between payload- and terminal-side correction into a continuum.

\label{sec:res_robust}The SINR loss of the ST-SIM is about $1$~dB for terminal-position errors up to $1$~km, as for Benchmark~2 and as predicted by Propositions~\ref{prop:ici} and~\ref{prop:ephemeris} (Fig.~\ref{fig:robust}), and $0.3$~dB for satellite-velocity errors up to $10$~m/s. {The initial acquisition of an unregistered terminal is outside the scope of the ST-SIM, since no dedicated mode exists before its position is known (Section~\ref{sec:model}); for the reacquisition of a registered terminal and for tracking during a session,} the ST-SIM delivers a carrier-Doppler-free signal{, so that} a terminal needs a single frequency hypothesis, whereas a common-translation payload leaves $9$ ($13$) hypotheses in median (95th percentile) at $200$~km and $11$ ($17$) at $300$~km; the oscillator offset adds the same range to all schemes.

\section{Conclusion}\label{sec:conclusion}
This paper introduced space--time SIMs, which pair each spatial mode of a multilayer stack with its own frequency translation, and applied them to Doppler-resilient LEO broadcasting with a single RF chain. For the ideal aperture field, stacking convolves the temporal spectra of the layers (Lemma~\ref{lem:conv}), the minimum-energy space--frequency synthesis has a closed-form solution governed by the terminal geometry (Proposition~\ref{prop:solution}), full column rank of the steering matrix is necessary and sufficient for decoupling $K$ terminal-specific translations (Theorem~\ref{thm:perfect}), grouping relaxes this condition (Remark~\ref{rem:grouped}), and any common translation, i.e., any static SIM or single-translation time-modulated surface, leaves half the Doppler span (Proposition~\ref{prop:common}). Algorithm~\ref{alg:synthesis} decreases the synthesis objective monotonically to stationary points, and explicit laws quantify quantization, control rate, and ephemeris errors. Under the ideal phase-only wave-domain model, a four-layer stack reached synthesis errors of $-58$~dB with continuous and $-13$~dB with three-bit phases, removed the Doppler floor of a static SIM above a footprint radius of about {$45$}~km, matched ideal digital per-terminal precompensation within $0.5$~dB and $0.05$~bit/s/Hz at equal radiated peak energy, and kept an in-band SIR of $26$~dB with three bits at a $7.68$-MHz control rate{, whereas a single time-modulated layer stalled at a synthesis error of $-6$~dB, which identifies inter-layer mixing as the enabling element of the architecture}.

The results also delimit the regime of the architecture. Separating terminals costs modal energy that grows with their number and shrinks with the aperture ($+${$8.4$}~dB from $36$ to $256$ meta-atoms per layer). A static beam with terminal-side CFO correction attained a higher SINR and, {with dedicated pilots, a higher broadcast spectral efficiency for three or more terminals}; the ST-SIM led for two terminals when CFO tracking costs more than {$3\%$} of the resources, and always delivered a carrier-Doppler-free signal that removes the frequency search of {registered terminals}, with grouped compensation providing a continuum between the two approaches. At equal feed power, each layer of the normalized cascade model costs about $4$~dB of diffraction loss, which favors two or three layers. 


\appendices
\section{Light-Time Doppler Model}\label{app:lighttime}
Let $\mathcal T_k(u)\triangleq u+\tau_k(u)$, where $\tau_k$ solves $c\,\tau_k(u)=\norm{\bm r_k(u+\tau_k(u))-\bm r_s(u)}_2\triangleq d_k(u)$ \cite{montenbruck2002satellite}, and let $\bm u_k$ be the unit vector from $\bm r_s(u)$ to $\bm r_k(u+\tau_k)$. Since $\abs{\bm u_k^T\dot{\bm r}_k}<c$ and $\abs{\bm u_k^T\dot{\bm r}_s}<c$, the light-time equation has a unique solution, $\mathcal T_k$ is strictly increasing, and the retarded transmit time $u_k(t)=\mathcal T_k^{-1}(t)$ is well defined.
\begin{proposition}[Light-time LEO Doppler]\label{prop:doppler}
The instantaneous carrier Doppler frequency of \eqref{eq:exactrx} is
\begin{equation}
f_{D,k}^{\rm ex}(t)=-f_c\frac{\dot\tau_k(u)}{1+\dot\tau_k(u)}
=-f_c\frac{\bm u_k^T\bm v_k^{\rm rel}}{c-\bm u_k^T\dot{\bm r}_s},\quad u=u_k(t),
\label{eq:exactdoppler}
\end{equation}
so that $f_{D,k}^{\rm ex}(t)=-\frac{f_c}{c}\bm u_k^T\bm v_k^{\rm rel}\big[1+\mathcal O(\norm{\dot{\bm r}_s}_2/c)\big]$.
\end{proposition}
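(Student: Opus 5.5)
The plan is to read the Doppler off the carrier phase in \eqref{eq:exactrx} as a function of receive time $t$, and then to evaluate $\dot\tau_k$ by implicitly differentiating the light-time equation.

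\emph{Step 1: Doppler as a phase derivative.} The carrier term of \eqref{eq:exactrx} is $e^{-\jmathc 2\pi f_c\tau_k(u)}$ with $u=u_k(t)$, so the instantaneous frequency is
\[
f_{D,k}^{\rm ex}(t)=-f_c\,\frac{d\tau_k}{du}\,\frac{du_k}{dt}.
\]
Differentiating $t=\mathcal T_k(u)=u+\tau_k(u)$ gives $du_k/dt=1/(1+\dot\tau_k(u))$. This derivative is positive because $\mathcal T_k$ is strictly increasing, as stated before the proposition. Combining the two yields the first equality in \eqref{eq:exactdoppler}.

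\emph{Step 2: implicit differentiation.} Set $\bm\rho(u)=\bm r_k(u+\tau_k(u))-\bm r_s(u)$, so that $c\,\tau_k=\norm{\bm\rho}_2$. Since $d_k>0$, the norm is differentiable and
\[
c\,\dot\tau_k=\bm u_k^T\dot{\bm\rho}=\bm u_k^T\big[\dot{\bm r}_k(1+\dot\tau_k)-\dot{\bm r}_s\big].
\]
Solving this linear equation gives $\dot\tau_k\,(c-\bm u_k^T\dot{\bm r}_k)=\bm u_k^T(\dot{\bm r}_k-\dot{\bm r}_s)=\bm u_k^T\bm v_k^{\rm rel}$, where the velocities are understood as $\dot{\bm r}_k$ at the reception time and $\dot{\bm r}_s$ at $u$.

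\emph{Step 3: form the ratio.} From Step 2,
\[
1+\dot\tau_k=\frac{c-\bm u_k^T\dot{\bm r}_k+\bm u_k^T\bm v_k^{\rm rel}}{c-\bm u_k^T\dot{\bm r}_k}=\frac{c-\bm u_k^T\dot{\bm r}_s}{c-\bm u_k^T\dot{\bm r}_k}.
\]
Hence $\dot\tau_k/(1+\dot\tau_k)=\bm u_k^T\bm v_k^{\rm rel}/(c-\bm u_k^T\dot{\bm r}_s)$, which is the second equality in \eqref{eq:exactdoppler}. Both denominators are nonzero by the speed assumptions $\abs{\bm u_k^T\dot{\bm r}_k}<c$ and $\abs{\bm u_k^T\dot{\bm r}_s}<c$.

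\emph{Step 4: first-order expansion.} Write
\[
\frac{1}{c-\bm u_k^T\dot{\bm r}_s}=\frac1c\Big(1-\frac{\bm u_k^T\dot{\bm r}_s}{c}\Big)^{-1}=\frac1c\big[1+\mathcal O(\norm{\dot{\bm r}_s}_2/c)\big],
\]
using $\abs{\bm u_k^T\dot{\bm r}_s}\le\norm{\dot{\bm r}_s}_2$. For LEO, $7.5\text{ km/s}/c\approx2.5\times10^{-5}$, which matches the relative correction quoted in Section~\ref{sec:model}.

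The main obstacle is bookkeeping rather than analysis. One has to check that $\dot\tau_k$ is well defined, which follows from the implicit function theorem once $c-\bm u_k^T\dot{\bm r}_k\ne0$. One also has to keep the evaluation times consistent: terminal velocity at reception and satellite velocity at emission. Finally, the retarded-time chain rule must be applied in the correct direction.
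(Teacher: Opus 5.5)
Your proof is correct and is essentially the paper's argument: the chain rule on the carrier phase $-2\pi f_c\tau_k(u_k(t))$ with $dt/du=1+\dot\tau_k$, implicit differentiation of the light-time equation giving $1+\dot\tau_k=(c-\bm u_k^T\dot{\bm r}_s)/(c-\bm u_k^T\dot{\bm r}_k)$, and expansion of $1/(c-\bm u_k^T\dot{\bm r}_s)$ in $\norm{\dot{\bm r}_s}_2/c$. Your added bookkeeping (mixed evaluation times, nonvanishing denominators) is sound. One small caveat: $1+\dot\tau_k>0$ is best justified by your Step~3 formula together with the speed bounds, since strict monotonicity of $\mathcal T_k$ alone only gives a nonnegative derivative.
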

\begin{IEEEproof}
The carrier phase $\varphi_k(t)=-2\pi f_c\tau_k(u_k(t))$ gives $f_{D,k}^{\rm ex}=\frac{1}{2\pi}\frac{d\varphi_k}{dt}=-f_c\dot\tau_k\frac{du}{dt}$ with $dt/du=1+\dot\tau_k$, i.e., the first form. Differentiating the light-time equation, with the terminal at $u+\tau_k(u)$, gives $c\dot\tau_k=\bm u_k^T[(1+\dot\tau_k)\dot{\bm r}_k-\dot{\bm r}_s]$, hence $1+\dot\tau_k=(c-\bm u_k^T\dot{\bm r}_s)/(c-\bm u_k^T\dot{\bm r}_k)$ and $\dot\tau_k/(1+\dot\tau_k)=\bm u_k^T\bm v_k^{\rm rel}/(c-\bm u_k^T\dot{\bm r}_s)$, the second form; expanding $1/(c-\bm u_k^T\dot{\bm r}_s)$ in $\norm{\dot{\bm r}_s}_2/c$ completes the proof.
\end{IEEEproof}
For LEO motion the relative correction is at most $\norm{\dot{\bm r}_s}_2/c=2.5\times10^{-5}$, so the first-order form is used throughout.

\balance
\bibliographystyle{IEEEtran}
\bibliography{main}

\end{document}